\documentclass{scrartcl}

\usepackage[T1]{fontenc}
\usepackage[utf8]{inputenc}

\usepackage{amsmath}
\usepackage{amssymb}
\usepackage{bm}
\usepackage{mathrsfs}
\usepackage{csquotes}
\usepackage{enumitem}

\usepackage{tikz}
\usetikzlibrary{positioning}
\usetikzlibrary{automata}
\usetikzlibrary{trees}
\usetikzlibrary{calc}
\usetikzlibrary{decorations.pathmorphing}

\usepackage{subcaption}

\usepackage{amsthm}
\usepackage{thmtools}

\theoremstyle{plain}
\newtheorem{theorem}{Theorem}[section]

\newtheorem{proposition}[theorem]{Proposition}

\newtheorem{fact}[theorem]{Fact}

\theoremstyle{definition}
\newtheorem{definition}[theorem]{Definition}
\newtheorem{example}[theorem]{Example}

\theoremstyle{remark}
\newtheorem{remark}[theorem]{Remark}
\newtheorem*{remark*}{Remark}

\usepackage{calc}
\newlength{\edgelength}

\newcommand{\aedge}[3]{%
  \begin{tikzpicture}[auto, shorten >=1pt, >=latex, baseline=(l.base), inner sep=0pt, outer xsep=0.3333em]
    \node (l) {\ensuremath{#1}};%
    \setlength{\edgelength}{\widthof{\scriptsize\ensuremath{#2}}+0.5cm}%
    \node[base right=\edgelength of l] (r) {\ensuremath{#3}};%
    \path[->] (l.mid east) edge node[inner sep=0pt, yshift=0.2em] {\scriptsize\ensuremath{#2}} (r.mid west);%
  \end{tikzpicture}%
}

\newcommand{\laedge}[3]{%
  \begin{tikzpicture}[auto, shorten >=1pt, >=latex, baseline=(l.base), inner sep=0pt, outer xsep=0.3333em]
    \node (l) {\ensuremath{#1}};%
    \setlength{\edgelength}{\widthof{\scriptsize\ensuremath{#2}}+0.5cm}%
    \node[base right=\edgelength of l] (r) {\ensuremath{#3}};%
    \path[<-] (l.mid east) edge node[inner sep=0pt, yshift=0.2em] {\scriptsize\ensuremath{#2}} (r.mid west);%
  \end{tikzpicture}%
}

\tikzset{path/.style = {decorate, decoration={snake, pre length=1mm, post length=2mm, amplitude=0.3mm, segment length=1mm}}}

\usepackage{tabularx}
\newcommand{\problem}[3][]{%
  \par\vspace{0.125cm plus 0.1cm minus 0.05cm}\begin{tabularx}{\textwidth-2\parindent}{lX}%
    \if\relax\detokenize{#1}\relax%
    \else%
      \textnormal{\textbf{Constant:}}&#1\\%
    \fi%
    \textnormal{\textbf{Input:}}&#2\\%
    \textnormal{\textbf{Question:}}&#3\\%
  \end{tabularx}\vspace{0.125cm plus 0.1cm minus 0.05cm}\par%
  }
  
\newcommand{\changefont}[3]{\fontfamily{#1}\fontseries{#2}\fontshape{#3}\selectfont}
\newcommand*{\DecProblem}[1]{{\changefont{cmtt}{m}{sc}#1}}
\usepackage{algorithm}
\usepackage{listings}

\lstnewenvironment{pseudocode}[1][]
{
  \lstset{
    mathescape=true,
    escapeinside={(*@}{@*)},
    numbers=left,
    numberstyle=\tiny,
    basicstyle=\scriptsize\ttfamily,
    keywordstyle=\fontencoding{T1}\fontfamily{lmtt}\bfseries,
    keywords={var, begin, end, return, procedure, fun, if, then, else, fi, for, each, while, do, od, true, false, and, or, not, break, accept, reject, assert, guess}
    numbers=left,
    xleftmargin=.04\textwidth,
    moredelim=[is][\color{gray}]{|}{|},
    #1
  }
}
{}
\DeclareMathOperator{\out}{out}
\newcommand{\cheq}{\mathrel{\check{=}}}
\usepackage{xspace}
\newcommand*{\ComplexityClass}[1]{\textsc{#1}\xspace}
\newcommand*{\NL}{\ComplexityClass{NL}}
\newcommand*{\LogSpace}{\ComplexityClass{LogSpace}}

\usepackage{hyperref}
\definecolor{darkblue}{rgb}{0,0.1,0.5}
\hypersetup{colorlinks,
  linkcolor=darkblue,
  anchorcolor=darkblue,
  citecolor=darkblue}

\usepackage[affil-it]{authblk}

\author{Jan Philipp Wächter\thanks{This research was supported by EPSRC project EP/Y008626/1 'Special Inverse Monoids: Geometry, Structure \& Algorithms'.}}
\affil{Department of Mathematics\\
  University of Manchester\\
  Oxford Road\\
  Manchester M13 9PL, UK}

\title{Context-Free Trees}

\begin{document}
  \maketitle
  \begin{abstract}
    Muller and Schupp introduced the concept of context-free graphs (originating from Cayley graphs of context-free groups).
    These graphs are always tree-like (i.\,e.\ quasi-isometric to a tree) and in this paper we investigate the subclass of bona fide context-free trees.
    We show that they have a finite-state description using multi-edge NFAs and that this specializes to certain partial DFAs in the case of deterministic graphs.
    We investigate this form of encoding algorithmically and show that the isomorphism problem for deterministic context-free trees is $\NL$-complete in the rooted and the non-rooted case.\\
    \noindent\footnotesize\textbf{Keywords:} context-free graph; isomorphism problem; $\NL$-complete; Schützenberger graph.\\
    \textbf{Mathematics Subject Classification 2010:}
      68Q17, %Computational difficulty of problems
      68R10, %Graph theory (in CS) (incl. graph drawing)
      05C62, %Graph representation
      68Q45% Formal languages and automata
      %20M18 inverse semigroups
      %20F65 geomertic group theory
  \end{abstract}
  
  \begin{section}{Introduction}
    The concept of context-free graphs was introduced by Muller and Schupp \cite{muller1985theory}.
    The motivation for considering these graphs came from studying the Cayley graphs of groups whose word problem is a context-free language (simply known as \emph{context-free groups}).
    In a celebrated result, the same authors had previously (almost) shown that the context-free groups are precisely the virtually free groups \cite{muller1983groups} (see also \cite{diekert2017contextfree}).
    Their proof originally required an additional hypothesis, which was later shown to be always satisfied by Dunwoody \cite{dunwoody1985accessibility}.
    Many algorithmic problems over groups are directly linked to properties of the Cayley graph (or related graphs), which motivates the algorithmic study of context-free graphs.
    Here, already Muller and Schupp gave a strong result based on Rabin's Tree Theorem \cite{rabin1969decidability}: the monadic second order theory of any context-free graph is decidable.
    Later, Kuske and Lohrey \cite{kuske2005logical} basically showed the converse, giving a strong link between context-free graphs and decidable monadic second order theory.
    
    Despite its strength, this does not suffice to answer questions on relations \emph{between} context-free graphs (for example whether they are isomorphic).
    Such questions arise in another branch of combinatorial algebra: the study of inverse semigroups and monoids, where the role of the Cayley graph is taken over by what is called the \emph{Schützenberger graphs}, around which a very rich theory has evolved; in particular the seminal work of Stephen \cite{stephen1990presentations} is noteworthy here.
    The fundamental difference to the group case is that we now have to deal with multiple Schützenberger graphs instead of a single Cayley graph.
    Accordingly, we have to answer questions on multiple graphs.
    Of course, these graphs will not be context-free in general (as they are not even in the group case).
    However, for finitely presented \emph{tree-like} inverse monoids, they are \cite{gray2022algorithmic}, which brings us back to the need for algorithms on (multiple) such graphs.
    
    While much of the existing work on context-free graphs (see e.\,g.\ \cite{ceccherini2012contextfree, kuske2005logical, lindorfer2020language, pelecq1996automorphism, rodaro2024generalizations}) is motivated by their mentioned origin and applications in algorithmic (semi)group theory, they are also objects worthwhile to be studied in their own right.
    In fact, while the current paper certainly also draws its motivations from algorithmic (inverse semi)group theory, it can be understood completely independently and without a deep background in that area.
    
    The first obstacle in the algorithmic study of context-free graphs is how to finitely encode them as inputs to algorithms.
    Muller and Schupp's original definition is based on the end-cones of the graphs or, more loosely speaking, their behavior \enquote{at infinity}.
    While this does not seem to point towards a finite description of these graphs, they showed that context-free graphs also arise as the configuration graphs of push-down automata (PDAs) (see \cite{muller1985theory} directly for more details).
    However, PDAs are algorithmically already quite challenging and a simpler description would be more useful.
    
    In this paper, we will concentrate on the subclass of context-free \emph{trees}.
    This is motivated from two directions:
    First, such trees arise as Schützenberger graphs for example in the form of finite \emph{Munn trees} in free inverse monoids \cite{munn1974free} but also as infinite trees in some other finitely presented inverse monoids \cite{margolis1993inverse} (with \cite{gray2022algorithmic}).
    Second, all context-free graphs are tree-like, in the sense that they are quasi-isometric to trees and, equivalently, admit strong tree decompositions \cite{antolin2011cayley, gray2022algorithmic}.
    In some sense, context-free graphs seem to be \enquote{dominated} by this underlying tree and it seems useful to investigate this further.
    
    We show that context-free trees have a finite description using a multi-edge nondeterministic finite automaton.
    The finite-state nature of this description makes it algorithmically approachable and there are arguments to be made that it is, in fact, the natural encoding of such objects.
    We further show that, for deterministic context-free trees, this description is a partial deterministic finite automaton (pDFA) (in which we do not find any path labeled by $aa^{-1}$).
    The deterministic case is highly interesting with respect to the above algebraic setting because the context-free graphs arising as Cayley or Schützenberger graphs are always deterministic.
    We show that, using this pDFA description, the isomorphism problem for deterministic context-free trees is decidable and, in fact, $\NL$-complete.
    This holds true in the case of rooted isomorphisms but also in the more complicated non-rooted case.
    
    It seems likely that the results of this paper could serve as a starting point for investigating finite-state descriptions of more general context-free graphs and to consider further algorithmic problems beyond the isomorphism problem.
    Another direction where the current results could serve as a stepping stone is computing descriptions of the isomorphisms (and, thus, also automorphisms) of a context-free tree (or more general graph).
    Such a description would be useful as the automorphism groups of the Schützenberger graphs are precisely the maximal subgroups of an inverse monoid \cite{stephen1990presentations} (an area which is actively researched, see e.\,g.\ \cite{gray2025maximal}).
  \end{section}
  \begin{section}{Preliminaries}
    \paragraph*{Complexity Theory.}
    We need fundamental notions of complexity-theory.
    In particular, we need the complexity class $\NL$ of problems solvable in nondeterministic logarithmic space.
    We also need the notion of being $\NL$-complete (where we use many-one \LogSpace-reductions computed by \LogSpace-transducers).
    For background information on these, the reader is referred to any standard textbook on the topic (e.\,g.\ \cite{papadimitriou97computational}).
    
    \paragraph*{Alphabets and Words.}
    An \emph{alphabet} is a set $A$ of \emph{letters.}
    An \emph{involutive alphabet} is an alphabet $A$ equipped with an involution $\cdot^{-1} \colon A \to A$ (where $a^{-1} = a$ is possible).
    To every (ordinary) alphabet $A$, we may attach the involutive alphabet $A^{\pm 1} = A \uplus A^{-1}$ where $A^{-1} = \{ a^{-1} \mid a \in A \}$ is a disjoint copy with the involution $a^{-1} = a$ and $(a^{-1})^{-1} = a$.
    
    A \emph{word} $w$ over an alphabet $A$ (involutive or not) is a finite sequence $w = a_1 \dots a_\ell$ of letters $a_1, \dots, a_n \in A$.
    Its \emph{length} is $\ell$ and the \emph{empty} word (i.\,e.\ the unique word of length $0$) is denoted by $\varepsilon$.
    The set of all words over $A$ is denoted by $A^*$, which has a natural operation of concatenating words.
    We write $A^\ell$ for the set of words over $A$ of length exactly $\ell$ and natural variations of this notation such as writing $A^{\leq \ell}$ for the set of words of length at most $\ell$.
    A word $u$ is a \emph{prefix} of another word $w$ if there is some $v$ such that $w = uv$.
    Finally, a (formal) \emph{language} is a subset of $A^*$ for some $A$.
    
    \paragraph*{Graphs.}
    Let $A$ be an alphabet.
    An \emph{$A$-graph} is a directed, $A$-edge labeled graph $\Gamma = (V, E)$ with node set $V$ and edge set $E \subseteq V \times A \times V$.
    In the context of an edge, we write $\aedge{u}{a}{v}$ but also $\laedge{v}{a}{u}$ for the element $(u, a, v) \in V \times A \times V$.
    Such an edge \emph{starts} in $u$, \emph{ends} in $v$ and is \emph{labeled} by $a$.
    The \emph{out-degree} (\emph{in-degree}) of a vertex is the cardinality of the set of edges that start (end) in it.
    We say the out-degree (in-degree) of an $A$-graph is \emph{uniformly bounded} if there is a finite number bounding the out-degree (in-degree) of every vertex.
    
    A \emph{path} of \emph{length} $\ell \in \mathbb{N}$ in the $A$-graph $\Gamma = (V, E)$ is a finite sequence $e_1 \dots e_\ell$ of edges $e_1, \dots, e_\ell \in E$ such that the end vertex of $e_{i}$ is the start vertex of $e_{i + 1}$.
    The path \emph{starts} in the same vertex $v_0$ as $e_1$ and \emph{ends} in the same vertex $v_\ell$ as $e_\ell$; it is a path \emph{from} $v_0$ \emph{to} $v_\ell$.
    If the edges are labeled by $a_1, \dots, a_\ell$, respectively, the \emph{label} of the path is $a_1 \dots a_\ell \in A^*$.
    
    An \emph{isomorphism} between two $A$-graphs (or an \emph{$A$-isomorphism} if we want to stress that we are considering $A$-graphs) $\Gamma = (V, E)$ and $\Delta = (W, F)$ is a bijective function $\varphi\colon V \to W$ such that we have an edge $\aedge{v_1}{a}{v_2} \in E$ if and only if there is an edge $\aedge{\varphi(v_1)}{a}{\varphi(v_2)} \in F$.
    
    Sometimes we denote a special node as the \emph{root} of an $A$-graph and speak of a \emph{rooted} $A$-graph. An isomorphism between rooted $A$-graphs must map the root to the root of the other graph.
    
    We will also consider \emph{node-labeled} $A$-graphs, which we equip with a function $q\colon V \to Q$ from the node set to some alphabet $Q$.
    If $\Gamma = (V, E)$ is an $A$-graph with node labeling $p\colon V \to P$ and $\Delta = (W, F)$ is an $A$-graph with node labeling $q\colon W \to Q$, then $\Gamma$ and $\Delta$ are \emph{isomorphic} as node-labeled $A$-graphs if there is an $A$-isomorphism $\iota\colon V \to W$ and a bijection $\beta\colon P \to Q$ such that $\beta(p(v)) = q(\iota(v))$.

    \paragraph*{Involutive and Context-Free Graphs.}
    An \emph{involutive} graph is an $A$-graph $\Gamma = (V, E)$ for an involutive alphabet $A$ where we have $\aedge{u}{a}{v} \in E \iff \laedge{u}{a^{-1}}{v} \in E$.
    Thus, in an involutive graph, the out- and the in-degrees coincide, which allows us to speak of the \emph{degree} of a node.
    We may extend the involution to the edge set: We call the edge $e^{-1} = \laedge{u}{a^{-1}}{v}$ the \emph{inverse} of the edge $e = \aedge{u}{a}{v}$.
    
    Generally, we may define the \emph{involutive closure} $\Gamma^{\pm 1}$ of an $A$-graph $\Gamma$: If $A$ is not an involutive alphabet, we consider $\Gamma$ over the alphabet $A^{\pm 1}$ instead. On an involutive alphabet, the involutive closure of $\Gamma = (V, E)$ is then the $A$-graph $\Gamma^{\pm 1} = (V, E \cup E^{-1}$) with $E^{-1} = \{ e^{-1} \mid e \in E \}$.
    Typically, when depicting an involutive graph, we only draw one edge for every involutive pair.
    
    The involution on the edges even extends into one on the paths:
    the \emph{inverse} of a path $\pi = e_1 \dots e_\ell$ from $u$ to $v$ is the path $\pi^{-1} = e_\ell^{-1} \dots e_1^{-1}$ from $v$ to $u$ of the same length.
    A path $\pi$ is \emph{reduced} if it cannot be written as $\pi = \varrho_1 e e^{-1} \varrho_2$ for any edge $e$ and paths $\varrho_1, \varrho_2$.

    An involutive graph is \emph{connected} if, for any two nodes $u$ and $v$, there is a path from $u$ to $v$.
    This allows us to define the \emph{distance} metric between the nodes of a connected involutive graph as the length of a shortest path from one to the other.
    
    If we are dealing with such a rooted involutive $A$-graph $\Gamma$, we let the \emph{level} $\ell(v)$ of node $v$ be its distance from the root.
    Note that we may only have edges between nodes whose levels differ by at most $1$ (i.\,e.\ edges to nodes on the previous, same or next level).
    The \emph{disc} $\Gamma^{\leq n}$ of \emph{radius} $n \in \mathbb{N}$ (around the root) consists of the nodes of level at most $n$.
    We may also include the edges between those nodes and see it as a subgraph.
    Even though $\Gamma$ is connected, the graph $\Gamma_{n + 1} = \Gamma \setminus \Gamma^{\leq n}$ may not be connected.
    For consistency, we also write $\Gamma_0$ for $\Gamma$ itself.
    For a node $v$ at level $n$, we use $\Gamma(v)$ to denote the connected component of $\Gamma_n$ which contains $v$ (i.\,e.\ $\Gamma(v)$ contains precisely those nodes reachable from $v$ by using only edges between nodes of level $n$ or larger).
    Such a connected component $\Gamma(v)$ is called an \emph{end-cone} of $\Gamma$ and the nodes of level precisely $n$ in it are its \emph{frontier points}.
    An \emph{end-isomorphism} is an $A$-isomorphism between two end-cones $\Gamma(u)$ and $\Gamma(v)$ that maps frontier points of $\Gamma(u)$ on frontier points of $\Gamma(v)$.
    
    A \emph{context-free} graph is a connected involutive $A$-graph for a finite involutive alphabet $A$ with uniformly bounded degree such that there is a root resulting in finitely many end-isomorphism classes.
    It turns out that this property does not depend on the chosen root \cite[2.7.~Corollary]{muller1985theory}; we may, thus, use any vertex as the root of a context-free graph.
    
    \paragraph*{Trees.}
    A rooted involutive graph is a \emph{tree} if every vertex is the end of a unique reduced path starting at the root.
    We call such a graph an \emph{involutive $A$-tree} or simply an \emph{involutive tree}.
    This property is, in fact, independent of the root and we may speak of involutive trees without having a denoted root.
    An $A$-graph is a \emph{tree} if its involutive closure is.
    
    In a tree, we may not have edges between nodes of the same level (as those edges would yield multiple reduced paths from the root).
    Thus, edges only go to the previous or next level.
    Similarly, we have:
    \begin{fact}
      In a rooted involutive tree $\Gamma$, the node $v$ is the only frontier point of the end-cone $\Gamma(v)$.
    \end{fact}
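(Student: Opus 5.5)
The plan is to argue directly from the definitions of level, end-cone and tree. Let $n = \ell(v)$, so the frontier points of $\Gamma(v)$ are exactly the nodes of $\Gamma(v)$ at level $n$, and $v$ is one of them. I will show that any other node $w$ of $\Gamma(v)$ with $\ell(w) = n$ would give two distinct reduced paths from the root, which is impossible.

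First, I record two consequences of the tree property, both already noted before the Fact.
\begin{enumerate}
\item Edges only join consecutive levels.
\item For every node $u$ at level $m \geq 1$, the unique reduced path from the root to $u$ has length exactly $m$. Indeed, a shortest path is automatically reduced, so the unique reduced path is a geodesic.
\end{enumerate}
Consequently, each node $u$ at level $m \geq 1$ has exactly one neighbour at level $m-1$, namely the predecessor on this geodesic. If there were two, each would extend its own geodesic to a reduced path ending at $u$, and these two reduced paths would differ. Call this unique neighbour the parent of $u$. Note also that an edge $e$ and its inverse $e^{-1}$ lead back along the same pair of nodes.

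Now suppose $w \neq v$ lies in $\Gamma(v)$ with $\ell(w) = n$. By definition there is a path $\pi$ from $v$ to $w$ that uses only nodes of level $\geq n$. Freely reducing $\pi$ by cancelling factors $e e^{-1}$ keeps its endpoints and only deletes nodes, so we may assume $\pi$ is reduced. Since $v \neq w$, $\pi$ is nonempty, and it starts and ends at level $n$ while never going below level $n$.

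The key step is to show that such a path cannot exist. Consider a vertex of maximal level on $\pi$; since the endpoints are at the minimum level $n$ and $\pi$ is nonempty, this vertex is an interior vertex $u$ at some level $m > n$. By (1), both edges of $\pi$ at $u$ go from $u$ down to level $m-1$. By uniqueness of the parent, they both go to the parent of $u$. Then the edge entering $u$ is $e$ and the edge leaving $u$ is an edge from $u$ to the same parent.

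This is the one delicate point, because parallel edges with different labels between the same two nodes could exist a priori. They are excluded as follows: two distinct edges between $u$ and its parent would again give two distinct reduced paths from the root to $u$, contradicting the tree property. Hence the leaving edge is exactly $e^{-1}$, so $\pi$ contains a factor $e e^{-1}$, contradicting that $\pi$ is reduced.

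Therefore no such $w$ exists, and $v$ is the only frontier point of $\Gamma(v)$. The case $n = 0$ is trivial, since the root is the only node at level $0$.
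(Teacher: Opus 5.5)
Your proof is correct, but it follows a different route from the paper's.

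The paper also reduces the connecting path $\pi$ from $v$ to $w$, but then simply prepends the reduced path $\pi_v$ from the root to $v$. The last edge of $\pi_v$ arrives from level $n-1$, whereas the first edge of $\pi$ leaves $v$ towards level at least $n$, so $\pi_v\pi$ is again reduced. It ends in $w$ and has length greater than $n$, so it differs from the unique reduced path to $w$, which has length $n$. That argument needs only that the unique reduced path is a geodesic. It uses no uniqueness of parents, no exclusion of parallel parent--child edges, and not even that edges join consecutive levels.

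Your argument is local instead. A vertex of maximal level on $\pi$ would have both path-neighbours equal to its unique parent, and the absence of parallel edges then forces a backtrack $ee^{-1}$. This costs you the two auxiliary lemmas, which you justify correctly. In return, what you actually prove is a stronger, reusable statement: along any reduced path in a tree the level has no interior local maximum, i.e.\ reduced paths first descend and then ascend. The paper's concatenation trick is shorter but tailored to this one claim.
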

    \begin{proof}
      Suppose we have two nodes $u$ and $v$ of the same level $n$ connected by a path $\pi_{uv}$ visiting only nodes of level $n$ or larger.
      Without loss of generality, we may assume that this path is reduced.
      Additionally, there is are two unique reduced paths $\pi_u$ and $\pi_v$ from the root to $u$ and $v$, respectively.
      Both of them must have length $n$.
      The last edge of $\pi_u$ connects $u$ to a node on level $n - 1$ and must, thus, differ from the first edge of $\pi_{uv}$.
      In other words, $\pi_u \pi_{uv}$ is another reduced path from $u$ to $v$ which differs from $\pi_v$ (as it has different length).
    \end{proof}
    \noindent
    An immediate consequence of the previous fact is that we may consider end-cones in involutive trees as rooted graphs and that end-isomorphisms are the same thing as rooted isomorphisms.
  \end{section}
  
  \begin{section}{Context-Free Trees}
    \begin{subsection}{The General Case}
      \paragraph*{Multi-Edge Nondeterministic Finite Automata.}
      A \emph{multi-edge nondeterministic finite automaton} (or \emph{mNFA} for short) is a finite $A$-edge-labeled multigraph. Formally, we consider it as a tuple $\mathcal{A} = (Q, A, T, \alpha, \lambda, \omega)$ where $Q$ is a finite set of \emph{states}, $A$ is a finite alphabet, $T$ is a finite set of \emph{transitions}, $\alpha, \omega\colon T \to Q$ are functions yielding the \emph{start} $\alpha(\tau)$ and the \emph{end} $\omega(\tau)$ of a transition $\tau \in T$ and $\lambda\colon T \to A$ is a function describing the \emph{label} $\lambda(\tau)$ of a transition $\tau \in T$.
      
      For a transition $\tau$ with $\alpha(\tau) = p$, $\lambda(\tau) = a$ and $\omega(\tau) = q$, we also use the graphical notation $\tau = \aedge{p}{a}{q}$. In addition, we use the common way of drawing finite automata but note that we may have multiple $a$-labeled edges between any two states.
      \begin{figure}\centering
        \begin{minipage}[t]{\dimexpr0.5\linewidth-0.25cm}
          \vspace{0pt}%
          \begin{subfigure}{\linewidth}\centering
            \begin{tikzpicture}[auto, shorten >=1pt, >=latex]
              \node[state] (q) {$p$};
              \draw[->] (q) edge[loop left] node {$a$} (q)
                            edge[loop right] node {$a$} (q);
            \end{tikzpicture}
            \caption{A very simple mNFA with one state and two self-loops $\tau_0$ and $\tau_1$.}
          \end{subfigure}\\%
          \begin{subfigure}{\linewidth}\centering
            \begin{tikzpicture}[
                shorten >=1pt, >=latex,
                level distance=1.5cm, auto,
                level 1/.style={sibling distance=3cm},
                level 2/.style={sibling distance=1.5cm},
                level 3/.style={sibling distance=1cm, level distance=1cm},
                level 4/.style={sibling distance=0.5cm}]
              \node {$\varepsilon$}
                child {node {$\tau_0$}
                  child {node {$\tau_0 \tau_0$}
                    child {
                      node[inner sep=0pt] {$\vdots$}
                      edge from parent[->] node[swap] {$a$}
                    }
                    child {
                      node[inner sep=0pt] {$\vdots$}
                      edge from parent[{latex[fill=lightgray]}->] node {$a$}
                    }
                    edge from parent[->] node[swap] {$a$}
                  }
                  child {node {$\tau_0 \tau_1$}
                    child {
                      node[inner sep=0pt] {$\vdots$}
                      edge from parent[->] node[swap] {$a$}
                    }
                    child {
                      node[inner sep=0pt] {$\vdots$}
                      edge from parent[{latex[fill=lightgray]}->] node {$a$}
                    }
                    edge from parent[{latex[fill=lightgray]}->] node {$a$}
                  }
                  edge from parent[->] node[swap] {$a$}
                }
               child {node {$\tau_1$}
                 child {node {$\tau_1 \tau_0$}
                   child {
                     node[inner sep=0pt] {$\vdots$}
                     edge from parent[->] node[swap] {$a$}
                   }
                   child {
                     node[inner sep=0pt] {$\vdots$}
                     edge from parent[{latex[fill=lightgray]}->] node {$a$}
                   }
                   edge from parent[->] node[swap] {$a$}
                 }
                 child {node {$\tau_1 \tau_1$}
                   child {
                     node[inner sep=0pt] {$\vdots$}
                     edge from parent[->] node[swap] {$a$}
                   }
                   child {
                     node[inner sep=0pt] {$\vdots$}
                     edge from parent[{latex[fill=lightgray]}->] node {$a$}
                   }
                   edge from parent[{latex[fill=lightgray]}->] node {$a$}
                 }
                 edge from parent[{latex[fill=lightgray]}->] node {$a$}
               };
            \end{tikzpicture}
            \caption{The associate tree $\Gamma(p)$. All nodes are labeled with $p$. Only one edge per involutive pair is drawn; oriented in the black direction.}\label{sfig:generatedTree}
          \end{subfigure}
          \caption{Example of an mNFA with an associated tree.}\label{fig:ex1}
        \end{minipage}\hfill%
        \begin{minipage}[t]{\dimexpr0.5\linewidth-0.25cm}\centering
          \vspace{0pt}%
          \begin{tikzpicture}[auto, shorten >=1pt, >=latex]
            \node[state] (q) {$p$};
            \draw[->] (q) edge[loop left] node {$a$} (q)
                          edge[loop right] node {$a^{-1}$} (q);
          \end{tikzpicture}
          \caption{An mNFA for \autoref{sfig:generatedTree} with the gray orientation.}\label{fig:exInvolutive}
          \nextfloat
          \begin{subfigure}{\linewidth}\centering
            \begin{tikzpicture}[auto, shorten >=1pt, >=latex]
              \node[state] (p) {$p$};
              \node[state, right=of p] (q) {$q$};
              \draw[->] (p) edge[loop left] node {$a$} (p)
                            edge node {$b$} (q)
                        (q) edge[loop right] node {$b$} (q)
                        ;
            \end{tikzpicture}
            \caption{An mNFA with two states.}
          \end{subfigure}\\%
          \begin{subfigure}{\linewidth}\centering
            \begin{tikzpicture}[auto, on grid, shorten >=1pt, >=latex]
              \node[fill, circle, inner sep=1.5pt] (e) {};
              \node[fill, circle, inner sep=1.5pt, right=of e] (b) {};
              \node[right=of b] (bb) {$\ldots$};
              
              \node[fill, circle, inner sep=1.5pt, below=0.75cm of e] (a) {};
              \node[fill, circle, inner sep=1.5pt, right=of a] (ab) {};
              \node[right=of ab] (abb) {$\ldots$};
              
              \node[fill, circle, inner sep=1.5pt, below=0.75cm of a] (aa) {};
              \node[fill, circle, inner sep=1.5pt, right=of aa] (aab) {};
              \node[right=of aab] (aabb) {$\ldots$};
              
              \node[below=of aa, yshift=1ex, inner sep=0pt] (aaa) {$\vdots$};
              
              \draw[->] (e) edge node[left] {$a$} (a)
                            edge node {$b$} (b)
                        (b) edge node {$b$} (bb)
              ;
              \draw[->] (a) edge node[left] {$a$} (aa)
                            edge node {$b$} (ab)
                        (ab) edge node {$b$} (abb)
              ;
              \draw[->] (aa) edge node[left] {$a$} (aaa)
                             edge node {$b$} (aab)
                        (aab) edge node {$b$} (aabb)
              ;
            \end{tikzpicture}
            \caption{The graph $\Gamma(p)$. The nodes on the left are labeled $p$, all other nodes are labeled $q$.}
          \end{subfigure}
          \caption{Another example of an mNFA with the associated tree.}\label{fig:ex2}
        \end{minipage}
      \end{figure}
      
      A \emph{run} of the mNFA $\mathcal{A}$ is a sequence $\varrho = \tau_1 \dots \tau_\ell$ of transitions $\tau_1, \dots, \tau_\ell \in T$ with $\omega(\tau_i) = \alpha(\tau_{i + 1})$ (for all $i \leq i < \ell$).
      It \emph{starts} in $\alpha(\varrho) = \alpha(\tau_1)$, \emph{ends} in $\omega(\varrho) = \omega(\tau_\ell)$, has \emph{length} $\ell$ and is \emph{labeled} with $\lambda(\varrho) = \lambda(\tau_1) \dots \lambda(\tau_\ell)$.
      A state $q$ is \emph{reachable} or \emph{accessible} form a state $p$ if there is a run starting in $p$ and ending in $q$.
      A state is called a \emph{root} of $\mathcal{A}$ if all states are reachable from it.
      
      To every state $p \in Q$ of an mNFA $\mathcal{A} = (Q, A, T, \alpha, \lambda, \omega)$, we may associate the tree of computations/runs of $\mathcal{A}$ starting in $p$.
      Formally, we consider the $A$-graph whose node set $V$ is the set of runs of $\mathcal{A}$ starting in $p$.
      We have an edge from $\varrho$ to $\pi$ whenever $\pi = \varrho \tau$ for some transition $\tau \in T$.
      It is labeled by $\lambda(\tau)$.
      We denote the empty path $\varepsilon$ as the root of this graph and observe that all edges are oriented away from this root.
      We may assume $A$ to be an involutive alphabet (by passing to $A^{\pm 1}$ if necessary) and obtain an involutive $A$-graph by adding all inverse edges.
      This involutive $A$-graph is easily verified to be a tree and we denote it by $\Gamma(p)$.
      The degree in $\Gamma(p)$ is uniformly bounded by $|T|$.
      We may endow this graph with a node labeling: Let $\varrho \in T^*$ be a run of $\mathcal{A}$ starting in $p$ (i.\,e.\ a node of $\Gamma(p)$). This run ends in some state $q$ and we use this state as the label $q(\varrho) = q$ of $\varrho$ in $\Gamma(p)$.
      The reader may find examples of two mNFAs with a corresponding associated tree in \autoref{fig:ex1}, \autoref{fig:exInvolutive} and \autoref{fig:ex2}.
      
      Sometimes it is convenient to use a recursive description of $\Gamma(p)$ instead.
      For an $A$-graph $\Gamma = (V, E)$ with $V \subseteq T^*$ and $\tau \in T$, let $\tau\Gamma$ be the $A$-graph with node set $\tau V = \{ \tau v \mid v \in V \subseteq T^* \}$ and edge set $\tau E = \{ \aedge{\tau u}{a}{\tau v} \mid \aedge{u}{a}{v} \in E \}$. If $\Gamma$ has a root $r$, we denote $\tau r$ as the root of $\tau\Gamma$ and, for a node labeling $q\colon V \to Q$, we let $\tau q\colon \tau V \to Q$ with $\tau q(\tau v) = q(v)$.
      Clearly, $\Gamma$ is isomorphic to $\tau\Gamma$ as a (rooted/node-labeled) $A$-graph (via the $A$-isomorphism $v \mapsto \tau v$).
      
      Let $\{ \tau_1, \dots, \tau_n \}$ be the set of transition starting in $p$.
      We may write $\tau_i = \aedge{p}{a_i}{q_i}$ (for $i = 1, \dots, n$) and observe that $\Gamma(p)$ is the $A$-graph arising from the union $\bigcup_{i = 1}^n \tau_i \Gamma(q_i)$ with the additional node $\varepsilon$ and the additional edges $\{ \aedge{\varepsilon}{a_i}{\tau_i} \mid i = 1, \dots, n \}$ and their inverses (where $\varepsilon$ is labeled with $p$).
      
      \paragraph*{Regular Trees.}
      Using mNFAs, we may define what a regular tree is.
      \begin{definition}
        A rooted, node-labeled involutive $A$-tree is \emph{regular} if it is isomorphic to $\Gamma(p)$ (as a rooted, node-labeled $A$-graph) for some state $p$ of an mNFA with alphabet $A$.
        A rooted involutive $A$-tree is \emph{regular} if it is regular for some node labeling and
        an involutive $A$-tree is \emph{regular} if it is regular with some vertex chosen as the root.
      \end{definition}
      \enlargethispage{\baselineskip}
      It turns out that regularity does not depend on the chosen root, which allows us to choose any node as the root for a regular involutive $A$-tree:
      \begin{proposition}\label{prop:regularityIsRootless}
        Let $\Gamma$ be an involutive $A$-tree and $u$ and $v$ be two of its nodes.
        Then, $\Gamma$ is regular with respect to $u$ as the root if and only if it is regular with $v$ as the root.
      \end{proposition}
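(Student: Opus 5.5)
By symmetry it suffices to show one direction. Since $\Gamma$ is connected and a path from $u$ to $v$ passes through finitely many nodes, induction on the distance lets us further assume that $u$ and $v$ are adjacent. So let $\Gamma$ be isomorphic, as a rooted node-labeled $A$-graph, to $\Gamma(p)$ for a state $p$ of an mNFA $\mathcal{A} = (Q, A, T, \alpha, \lambda, \omega)$, with $u$ corresponding to the root $\varepsilon$. We may take $A$ involutive and identify $\Gamma$ with $\Gamma(p)$. The neighbour $v$ is then a run $\tau_0$ of length one with $\tau_0 = \aedge{p}{a_0}{q_0}$. We will construct an mNFA $\mathcal{B}$ with a state $s$ such that $\Gamma(s)$ is isomorphic to $\Gamma$ rooted at $\tau_0$.

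Re-rooted at $\tau_0$, the tree decomposes as follows. The root $\tau_0$ has edges to its children $\tau_0\tau$ (for $\tau$ starting in $q_0$), each followed by a copy of $\tau_0 \tau\Gamma(\omega(\tau))$, which is unchanged. In addition, $\tau_0$ has one edge labeled $a_0^{-1}$ to $\varepsilon$. Seen from $\tau_0$, the node $\varepsilon$ has children $\tau$ for all transitions $\tau \neq \tau_0$ starting in $p$, and below each such $\tau$ hangs an unchanged copy of $\Gamma(\omega(\tau))$.

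Accordingly, $\mathcal{B}$ is built from $\mathcal{A}$ by adding two fresh states:
\begin{itemize}[label={}]
\item $\hat p$, which is a copy of $p$ with every outgoing transition except $\tau_0$ (same labels, same targets in $Q$);
\item $s$, which is a copy of $q_0$ with all of its outgoing transitions into $Q$, plus one extra transition $\aedge{s}{a_0^{-1}}{\hat p}$.
\end{itemize}
Node labels are taken from $\mathcal{B}$'s states. The definition of regularity lets us choose any node labeling, so using new labels $s$ and $\hat p$ is harmless.

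The isomorphism $\Gamma(s) \to \Gamma$ (rooted at $\tau_0$) is defined via the recursive description of $\Gamma(\cdot)$. Transitions of $\mathcal{B}$ leaving $s$ correspond to transitions $\tau$ of $\mathcal{A}$ leaving $q_0$, and a run $\sigma\varrho$ with $\varrho$ a run of $\mathcal{A}$ maps to $\tau_0\tau\varrho$. The special transition maps to $\varepsilon$, and runs continuing through $\hat p$ with a copy of $\tau \neq \tau_0$ followed by $\varrho$ map to $\tau\varrho$. One checks that this is a bijection onto the node set $T^*$-runs from $p$. Edge preservation holds in both orientations: the edge $\tau_0 \to \varepsilon$ labeled $a_0^{-1}$ is the inverse of the original $a_0$-edge, and all other edges are copied verbatim. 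Because every edge in $\Gamma(s)$ is oriented away from the root but edges in the original point the other way along the re-rooted branch, this check relies on working in the involutive closure.

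The main subtlety is exactly this reversal of the single edge on the path between the old root and the new one. It is why $A$ must be involutive and why the label $a_0^{-1}$ appears. Once the one-step case is done, iterating it along the path from $u$ to $v$ gives the claim; each step only adds finitely many states, so the resulting automaton is still finite.
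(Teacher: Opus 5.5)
Your proposal is correct and is essentially the paper's proof: you reduce to adjacent nodes, then add a fresh copy $\hat p$ of $p$ without $\tau_0$ and a fresh copy $s$ of $q_0$ with an extra $a_0^{-1}$-transition to $\hat p$, which is exactly the paper's construction of $p'$ and $q'$. Your explicit run-level bijection, and your remark that the edge check relies on the involutive closure, spell out what the paper leaves to its diagrams.
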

      \begin{proof}
        It suffices to show the statement for nodes $u$ and $v$ of distance $1$ (as the rest then follows by induction).
        
        We may assume that $\Gamma$ is equal to $\Gamma(p)$ for some state $p$ of an mNFA $\mathcal{A} = (Q, A, T, \alpha, \lambda,\allowbreak \omega)$.
        Let $\sigma_0$ be a neighbor of the root $\varepsilon$ in $\Gamma(p)$; i.\,e.\ we have $T \ni \sigma_0 = \aedge{p}{a_0}{q_0}$.
        We need to show that, if we change the root to be $\sigma_0$, then the resulting graph is isomorphic to $\Gamma(q')$ as a rooted $A$-graph for some state $q'$ of a (potentially different) mNFA $\mathcal{A}'$ (which also yields a new node labeling).
        
        Let $\sigma_0, \dots, \sigma_m$ denote the transitions of $\mathcal{A}$ starting at $p$ and write $\sigma_i = \aedge{p}{a_i}{q_i}$ (for $i = 0, \dots, m$).
        Furthermore, let $\tau_1, \dots, \tau_n$ denote the transitions starting in $q_0$ and write $\tau_i = \aedge{q_0}{b_i}{r_i}$ (for $i = 1, \dots, m$).
        
        Using the recursive description of $\Gamma(p)$, we may now depict it in the following way (where we only draw one edge for each involutive edge pair):
        \begin{center}
          \begin{tikzpicture}[auto, shorten >=1pt, >=latex]
            \node (e) {$\varepsilon$};
            \node[right=2.5cm of e] (s0) {$\sigma_0$};
            \node[above left=0.25ex and 1ex of e.base] {$p$};
            \node[above right=0.25ex and 1ex of s0.base] {$q$};
            
            \node[below=0.75cm of e] (si) {$\sigma_i$};
            \node[below=0.75cm of s0] (ti) {$\sigma_0 \tau_i$};
            
            \draw[->] (e) edge node {$a_0$} (s0)
                          edge node {$a_i$} (si);
            \draw[->] (s0) edge node {$b_i$} (ti);
            
            \node[below=1cm of si.base, anchor=base] (gqi) {$\sigma_i \Gamma(q_i)$};
            \draw (si.south west) -- ($(gqi.base west)+(-1em,-1.5ex)$) -- ($(gqi.base east)+(1em,-1.5ex)$) -- (si.south east);
            
            \node[below=1cm of ti.base, anchor=base] (gri) {$\sigma_0 \tau_i \Gamma(r_i)$};
            \draw (ti.south west) -- ($(gri.base west)+(-1em,-1.5ex)$) -- ($(gri.base east)+(1em,-1.5ex)$) -- (ti.south east);
          \end{tikzpicture}
        \end{center}
        
        In order to construct $\mathcal{A}'$, we start with $\mathcal{A}$ and add new states $p'$ and $q'$ as well as the following (pairwise disjoint) transitions:
        \begin{align*}
          \sigma_i' &= \aedge{p'}{a_i}{q_i} \text{ for } i = 1, \dots, m \text{ (but not for $i = 0$),} \\
          \tau_0' &= \aedge{q'}{a_0^{-1}}{p'} \text{ and} \\
          \tau_i' &= \aedge{q'}{b_i}{r_i} \text{ for } i = 1, \dots, n
        \end{align*}
        Effectively, we have duplicated the state $p$ into $p'$ with almost all of its transitions but we have excluded $\sigma_0$. Similarly, we have duplicated $q_0$ into $q'$ and added an additional $a_0^{-1}$-transition to $p'$.
        
        We now obtain the following $\Gamma(q')$ (again only drawing one edge per involutive pair):%
        \begin{center}
          \begin{tikzpicture}[auto, shorten >=1pt, >=latex]
            \node (e) {$\tau_0'$};
            \node[right=2.5cm of e] (s0) {$\varepsilon$};
            \node[above left=0.25ex and 1ex of e.base] {$p'$};
            \node[above right=0.25ex and 1ex of s0.base] {$q'$};
            
            \node[below=1.25cm of e.base, anchor=base] (si) {$\sigma_i'$};
            \node[below=1.25cm of s0.base, anchor=base] (ti) {$\tau_i'$};
            
            \draw[->] (s0) edge node[above] {$a_0^{-1}$} (e)
                      (e) edge node {$a_i$} (si);
            \draw[->] (s0) edge node {$b_i$} (ti);
            
            \node[below=1cm of si.base, anchor=base] (gqi) {$\sigma_i' \Gamma(q_i)$};
            \draw (si.south west) -- ($(gqi.base west)+(-1em,-1.5ex)$) -- ($(gqi.base east)+(1em,-1.5ex)$) -- (si.south east);
            
            \node[below=1cm of ti.base, anchor=base] (gri) {$\tau_i' \Gamma(r_i)$};
            \draw (ti.south west) -- ($(gri.base west)+(-1em,-1.5ex)$) -- ($(gri.base east)+(1em,-1.5ex)$) -- (ti.south east);
          \end{tikzpicture}
        \end{center}
        As an involutive rooted $A$-graph, this is isomorphic to $\Gamma(p)$ if we change the root there to $\sigma_0$.
        Thus, $\Gamma(p)$ remains regular after changing the root to a neighbor.
      \end{proof}
      
      In fact, we obtain that, for trees, being regular and being context-free coincide:%
      \begin{theorem}\label{thm:contextFreeIsRegular}%
        An involutive tree is regular if and only if it is context-free.
      \end{theorem}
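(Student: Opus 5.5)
We prove both directions separately, fixing a root throughout. By \autoref{prop:regularityIsRootless} and the root-independence of context-freeness \cite[2.7.~Corollary]{muller1985theory}, we may choose the root freely on either side. The key tool is the Fact above: in a rooted involutive tree, the end-cone $\Gamma(v)$ has $v$ as its only frontier point. Hence end-cones are rooted subtrees, and end-isomorphisms are exactly rooted $A$-isomorphisms between them.

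\emph{Regular $\Rightarrow$ context-free.} Let $\Gamma = \Gamma(p)$ for a state $p$ of an mNFA $\mathcal{A}$ with root $\varepsilon$. The degree is uniformly bounded by $|T|$ (each node has at most $|T|$ children and one parent edge pair, so after the involutive closure the bound is about $2|T|$). The alphabet is finite, and $\Gamma$ is connected since it is a tree. Take a node $\varrho \neq \varepsilon$, i.e.\ a nonempty run ending in some state $q$. Its end-cone $\Gamma(\varrho)$ consists of $\varrho$ and all runs extending $\varrho$. The map $\varrho\pi \mapsto \pi$ is therefore a rooted isomorphism onto $\Gamma(q)$, just as in the observation that $\tau\Gamma \cong \Gamma$. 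So every end-cone other than $\Gamma$ itself is isomorphic to some $\Gamma(q)$ with $q \in Q$. This gives at most $|Q|+1$ end-isomorphism classes.

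\emph{Context-free $\Rightarrow$ regular.} Let $\Gamma$ be a context-free tree with root $r$ and finitely many end-isomorphism classes $C_1, \dots, C_k$ of end-cones. We build an mNFA with state set $\{C_1,\dots,C_k\}$ together with an extra state for the whole tree if needed (in fact $\Gamma = \Gamma(r)$ is itself an end-cone, so it already has a class). For each class $C$ pick a representative end-cone $\Gamma(v)$. The children of $v$ are its neighbours at the next level, joined to $v$ by edges $\aedge{v}{a}{w}$ with $a \in A$, using the involutive alphabet directly. For each such edge add a transition $\aedge{C}{a}{C'}$, where $C'$ is the class of $\Gamma(w)$; if several edges occur, add parallel transitions, which is why multi-edges are needed. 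Edges between $v$ and $w$ with different labels count as different children edges; in a tree there is only one edge pair between two adjacent nodes anyway, since two edges would give two reduced paths. Degree boundedness makes this finite.

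We then show by the recursive description that $\Gamma(v) \cong \Gamma(C)$ as rooted graphs, for every end-cone $\Gamma(v)$ in class $C$. We construct the isomorphism level by level. It does not depend on the representative chosen, because a rooted isomorphism $\Gamma(v) \to \Gamma(v')$ maps children to children and induces rooted isomorphisms $\Gamma(w) \to \Gamma(w')$ of their end-cones, which therefore lie in the same class. We define maps $\varphi_n$ on the discs of radius $n$ inductively, matching at each step the children of a node in class $C$ with the transitions out of $C$, respecting labels and target classes. These maps are compatible, and their union is a bijection preserving edges, since every node lies at finite depth. This exhibits $\Gamma \cong \Gamma(C_0)$ with $C_0$ the class of $\Gamma(r)$, so $\Gamma$ is regular.

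The main obstacle is this last step. We must make the child-matching bijection consistent: an isomorphism between two representatives gives a bijection of children preserving labels and classes. The level-wise construction, which chooses such bijections recursively and takes the direct limit, needs care to produce a genuine isomorphism of infinite trees. König-type compactness is unnecessary because the choices are made top-down.
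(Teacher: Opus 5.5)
Your proof is correct and follows the paper's argument. In the forward direction you identify the end-cone of a run $\varrho$ ending in $q$ with $\varrho\Gamma(q)$, and in the converse you build the same mNFA: its states are the end-isomorphism classes, with one transition per child edge of a representative. The only difference is the last step, where you build a single rooted isomorphism top-down from recursively chosen label- and class-preserving child bijections; the paper instead only shows by induction that the discs of radius $\ell$ in $\Gamma(r_i)$ and $\Gamma(i)$ are isomorphic for every $\ell$, which tacitly needs a compactness argument (or exactly your compatible construction) to give an isomorphism of the infinite trees.
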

      \begin{proof}
        Let $\Gamma$ be an involutive $A$-tree. As a tree, $\Gamma$ is connected.
        
        First, we assume that it is regular, which means that, for some root and some node labeling, it is isomorphic to $\Gamma(q_0)$ (as a rooted, node-labeled $A$-graph) for some state $q_0$ of an mNFA with alphabet $A$.
        It suffices to show that $\Gamma(q_0)$ is context-free.
        As the alphabet of an mNFA, $A$ is finite and the degree of any node in $\Gamma(q_0)$ is trivially bounded by the (finite) number of transitions in the mNFA.
        Thus, it remains to show that $\Gamma(q_0)$ has finitely many end-isomorphism classes.
        However, the end-isomorphism class of $\Gamma(u)$ for a node $u$ is uniquely determined by the node label of $u$.
        More specifically, we claim that, if $u$ is labeled by the state $p$, then $\Gamma(u) = u \Gamma(p)$ and $u$ is the only frontier point of this end-cone.
        Since $u \Gamma(p)$ is isomorphic to $\Gamma(p)$ as a rooted graph, this yields an end-isomorphism.
        
        The claim itself can be shown using an induction on the recursive description of $\Gamma(q_0)$.
        The statement is true for the root $u = \varepsilon$ (which is labeled by $q_0$) by definition.
        Suppose we already have established $\Gamma(u) = u\Gamma(p)$ where $p$ is the label of $u$ and $u$ is the only frontier point.
        We need to show $\Gamma(u\tau) = u\tau \Gamma(q)$ for any transition $\tau = \aedge{p}{a}{q}$ (where $a \in A$).
        We obtain $\Gamma(u\tau)$ from $\Gamma(u)$ by removing all nodes at the same level as $u$ from it (these are the frontier points; i.\,e.\ we only remove $u$) and selecting the connected component containing $u\tau$.
        Since $v \mapsto uv$ is an $A$-isomorphism from $\Gamma(p)$ to $\Gamma(u) = u\Gamma(p)$, we may apply the recursive description of $\Gamma(p)$.
        Let $\tau_1, \dots, \tau_d$ be the transitions starting in $p$ and write $\tau_i = \aedge{p}{a_i}{q_i}$ (for $i = 1, \dots, d$).
        We obtain that $\Gamma(u)$ is equal to $\bigcup_{i = 1}^d u \tau_i \Gamma(q_i)$ with the additional node $u$ connected to each $u \tau_i$ with an $a_i$-labeled edge (and its inverse).
        In particular, we have all paths staring and ending in different $u \tau_i \Gamma(q_i)$ must pass through $u$ and the roots $u \tau_i$ are the only nodes directly connected to $u$ (i.\,e.\ the nodes on the next level).
        This shows the claim.
        
        For the converse direction, let $\Gamma$ be context-free.
        In particular, $A$ is finite and there is a finite number $D$ bounding the degree at any node.
        We may also choose a root $r$ with finitely many end-isomorphism classes.
        For these classes, we may choose $\Gamma(r_1), \dots, \Gamma(r_n)$ as representatives.
        Without loss of generality, we assume $r = r_1$.
        We endow $\Gamma$ with the node labeling mapping a node $v$ to $i \in \{ 1, \dots, n \}$ where $\Gamma(v)$ is end-isomorphic to $\Gamma(r_i)$.
        
        We construct an mNFA over the alphabet $A$ with state set $\{ 1, \dots, n \}$ such that $\Gamma(i)$ is isomorphic to $\Gamma(r_i)$ as a rooted, node-labeled $A$-graph (i.\,e.\ end-isomorphic).
        In particular, $\Gamma(1)$ is isomorphic to $\Gamma(r_1) = \Gamma(r) = \Gamma$, which is, thus, regular.
        
        For the transitions, consider $\Gamma(r_i)$ for some $i \in \{ 1, \dots, n \}$.
        Let $e_1, \dots, e_d$ with $e_k = \aedge{r_i}{a_k}{v_k}$ for $d \leq D$ be the edges starting in $r_i$ inside $\Gamma(r_i)$.
        Furthermore, let $j_k$ be the label of $v_k$ for each $k = 1, \dots, d$ (i.\,e.\ $\Gamma(v_k)$ is end-isomorphic to $\Gamma(r_{j_k})$).
        For each such edge, we add a transition $\tau_k = \aedge{i}{a_k}{j_k}$ to our mNFA (potentially creating multi-edges if there is more than one edge labeled by some $a$ to a node labeled by some $j$).
        
        The thus constructed mNFA has the above property (we may show by induction on $\ell$ that, for all $\ell \geq 0$ and $i \in \{ 1, \dots, n \}$, the disc of radius $\ell$ in $\Gamma(r_i)$ is isomorphic to $\Gamma^{\leq \ell}(i)$ using the construction of the mNFA and the recursive description of $\Gamma(i)$).
      \end{proof}
      \begin{remark}
        \autoref{thm:contextFreeIsRegular} states, in particular, that any context-free involutive tree is isomorphic to $\Gamma(p)$ for some state $p$ of an mNFA.
        This yields a finite description of the context-free tree where we might additionally assume that $p$ is a root of the mNFA (as we may drop any non-reachable states since this will not change $\Gamma(p)$).
        One might argue that this is indeed the \emph{natural} way to finitely describe a context-free tree and we will use this description of context-free involutive trees whenever we consider algorithms.
      \end{remark}
    \end{subsection}
    
    \begin{subsection}{The Deterministic Case}
      \paragraph*{Deterministic $A$-Graphs and pDFAs.}
      An $A$-graph $\Gamma = (V, E)$ is \emph{deterministic} if any pair $\aedge{u}{a}{v} \in E$ and $\aedge{u}{a}{v'} \in E$ of edges implies $v = v'$.
      A \emph{partial deterministic finite automaton} (or \emph{pDFA} for short) is a finite deterministic $A$-graph for some finite alphabet $A$; see \autoref{ex:pDFAs} below.
      The nodes of a pDFA are usually called \emph{states}, $A$ is called its alphabet and its edges are called \emph{transitions}.
      A finite path in $\mathcal{A}$ is, in the context of pDFAs, also called a \emph{run} of $\mathcal{A}$.
      
      For any state $p$ of a pDFA $\mathcal{D} = (Q, T)$ over $A$, we define $\out p = \{ a \mid \aedge{p}{a}{q} \in T \} \subseteq A$ as the set of letters readable from $p$ and let $p \cdot a \in Q$ be the unique state reached from $p$ by reading an $a \in \out p$.
      
      We may consider the pDFA $\mathcal{D}$ as a special case of an mNFA: we let $\mathcal{A} = (Q, A, T, \alpha, \lambda, \omega)$ with $\alpha(\aedge{p}{a}{q}) = p$, $\lambda(\aedge{p}{a}{q}) = a$ and $\omega(\aedge{p}{a}{q}) = q$.
      This yields a description of the mNFAs that belong to pDFAs:
      \begin{fact}\label{fct:pDFAandmNFA}
        An mNFA $\mathcal{A} = (Q, A, T, \alpha, \lambda, \omega)$ comes from a pDFA if and only if, for all $\tau, \tau' \in T$ with $\alpha(\tau) = \alpha(\tau')$ and $\lambda(\tau) = \lambda(\tau')$, we have $\tau = \tau'$.
      \end{fact}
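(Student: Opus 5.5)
The statement is a direct unpacking of definitions, so I will prove both directions by comparing the two formalisms. I read "comes from a pDFA" as: there is a pDFA $\mathcal{D} = (Q, T_{\mathcal{D}})$ over $A$ such that $\mathcal{A}$ equals, up to renaming transitions, the mNFA built from $\mathcal{D}$ as described above. In that mNFA the transition set is $T_{\mathcal{D}} \subseteq Q \times A \times Q$, and $\alpha$, $\lambda$, $\omega$ are the three projections. I will set up the translation from $\mathcal{A}$ to a candidate pDFA first and then check both implications.

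\emph{Forward direction.} Assume $\mathcal{A}$ comes from a pDFA $\mathcal{D}$. Take $\tau, \tau' \in T$ with $\alpha(\tau) = \alpha(\tau') = p$ and $\lambda(\tau) = \lambda(\tau') = a$. Then $\tau = \aedge{p}{a}{q}$ and $\tau' = \aedge{p}{a}{q'}$ are edges of $\mathcal{D}$. Determinism of $\mathcal{D}$ gives $q = q'$. Since transitions are the triples themselves, $\tau = \tau'$.

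\emph{Backward direction.} Assume the stated condition holds. Consider the map $\Phi \colon T \to Q \times A \times Q$ given by $\tau \mapsto (\alpha(\tau), \lambda(\tau), \omega(\tau))$. I check that $\Phi$ is injective: if $\Phi(\tau) = \Phi(\tau')$, then in particular $\alpha(\tau) = \alpha(\tau')$ and $\lambda(\tau) = \lambda(\tau')$, so $\tau = \tau'$ by the hypothesis.

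Let $E = \Phi(T)$ and $\mathcal{D} = (Q, E)$. This is a finite $A$-graph. It is deterministic: if $\aedge{p}{a}{q}$ and $\aedge{p}{a}{q'}$ both lie in $E$, they are $\Phi(\tau)$ and $\Phi(\tau')$ with equal start and label. The hypothesis then forces $\tau = \tau'$, hence $q = q'$. So $\mathcal{D}$ is a pDFA. Its associated mNFA is $\mathcal{A}$ transported along the bijection $\Phi \colon T \to E$, and this bijection commutes with $\alpha$, $\lambda$, $\omega$ by construction.

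The only step needing care is the meaning of "comes from": transitions of an mNFA are abstract objects, so equality must be taken up to a renaming of $T$ that respects $\alpha, \lambda, \omega$. I will say this explicitly. Everything else is routine, and nothing from the earlier results is required.
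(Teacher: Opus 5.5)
Your proof is correct and follows the paper's argument: the forward direction is immediate from determinism of the pDFA, and for the converse you use the same map $\tau \mapsto (\alpha(\tau), \lambda(\tau), \omega(\tau))$, whose injectivity and deterministic image both follow from the hypothesis. Your explicit remark that \enquote{comes from} must be read up to a renaming of $T$ compatible with $\alpha$, $\lambda$, $\omega$ is a precision the paper leaves implicit.
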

      \begin{proof}
        The just defined interpretation of a pDFA as an mNFA clearly satisfies the required condition.
        For the other direction, $T$ clearly maps injectively into $Q \times A \times Q$ by $\tau \mapsto \aedge{\alpha(\tau)}{\lambda(\tau)}{\omega(\tau)}$.
        Let $T'$ be the image of this map and observe that $(Q, T')$ is not only an $A$-graph but even a deterministic one.
        This pDFA yields precisely the original mNFA (under the above interpretation).
      \end{proof}
      
      For any state $p \in Q$ of a pDFA (over $A$), we may define the \emph{language} $\mathscr{L}(p)$ of $p$ as
      \[
        \mathscr{L}(p) = \{ w \in A^* \mid w \text{ labels a run starting in } p \}
      \]
      and observe that we have:
      \begin{fact}
        $\displaystyle\mathscr{L}(p) = \{ \varepsilon \} \cup \bigcup_{a \in \out p} a\mathscr{L}(p \cdot a)$
      \end{fact}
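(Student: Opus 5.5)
The identity is proved by a double inclusion, unfolding the definition of $\mathscr{L}(p)$ as the set of labels of runs of the pDFA that start in $p$. The key input is determinism: from $p$ there is at most one transition labeled $a$, namely $\aedge{p}{a}{p \cdot a}$, and it exists exactly when $a \in \out p$.

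For $\subseteq$, take $w \in \mathscr{L}(p)$ and a run $\pi$ from $p$ labeled $w$.
\begin{itemize}
\item If $\pi$ has length $0$, then $w = \varepsilon$.
\item Otherwise, write $\pi = e\,\pi'$ with first edge $e = \aedge{p}{a}{q}$. Then $a \in \out p$, and by determinism $q = p \cdot a$. The remainder $\pi'$ is a run starting in $p \cdot a$, labeled by the word $w'$ with $w = aw'$. Hence $w' \in \mathscr{L}(p \cdot a)$ and $w \in a\mathscr{L}(p\cdot a)$.
\end{itemize}

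For $\supseteq$:
\begin{itemize}
\item $\varepsilon$ labels the empty run at $p$, so $\varepsilon \in \mathscr{L}(p)$.
\item If $a \in \out p$ and $w' \in \mathscr{L}(p\cdot a)$, take a run $\pi'$ from $p \cdot a$ labeled $w'$. Prepending the transition $\aedge{p}{a}{p\cdot a}$ gives a run starting in $p$ labeled $aw'$.
\end{itemize}

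There is no real obstacle here. The only point needing care is the convention that the empty path counts as a run starting in $p$, which is what puts $\varepsilon$ into $\mathscr{L}(p)$; I would state this convention explicitly. Determinism is needed only to identify the target of the first edge as $p \cdot a$; without it the union would have to range over all successors of $p$.
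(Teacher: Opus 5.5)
Your double-inclusion argument is correct and is exactly the routine unfolding of the definition of $\mathscr{L}(p)$ that the paper relies on; the paper states this fact as an observation without giving a proof. Stating the empty-run convention explicitly is appropriate. The paper's definition of a path only fixes its start vertex through the first edge $e_1$, although it does treat the empty run $\varepsilon$ as a run elsewhere (e.\,g.\ as the root of $\Gamma(p)$).
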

      For general mNFAs, we had to consider all runs to describe the trees $\Gamma(p)$ belonging to its states.
      In the case of a pDFA, the transitions starting at a state $p$ are in one-to-one correspondence with $\out p$ and we may write $\Gamma(p)$ as the involutive closure of the union of the $a\Gamma(p \cdot a)$ for $a \in \out p$ with the additional node $\varepsilon$ (labeled by $p$) and the additional edges $\{ \aedge{\varepsilon}{a}{a} \mid a \in \out p \}$ (where $a\Gamma(p\cdot a)$ is defined analogously to $\tau\Gamma(q_i)$ in the mNFA case).
      In the thus arising tree, we have that any node is from $A^*$ and its level is given by its length as a word over $A$.
      Indeed, edges $\aedge{u}{a}{v}$ only exist between nodes $u$ and $v$ whose lengths differ by exactly one.
      The shorter one of the two is the \emph{parent} of the other one and the longer one is a \emph{child}.
      Iterating these, we obtain the notion of an \emph{ancestor} and an \emph{descendant}: $u$ is an ancestor of $v$ if and only if it is a prefix of $v$ and, conversely, this is the case if and only if $v$ is a descendant of $u$.
      The end-cone of $v$ in $\Gamma(p)$ is given by its nodes from $vA^*$.
      We will later write $\Gamma(p) \setminus vA^*$ for the graph arising from $\Gamma(p)$ by removing all nodes belonging to the end-cone of $v$ (and edges incident to them).
      
      From this, it is not difficult to observe that $\Gamma(p)$ is fully described by $\mathscr{L}(p)$:
      \begin{fact}\label{fct:isoIsLang}
        Let $p$ be a state of a pDFA $\mathcal{A}$ and let $q$ be a state of a pDFA $\mathcal{B}$.
        Then the rooted, involutive graphs $\Gamma(p)$ and $\Gamma(q)$ are isomorphic (not considering node labels) if and only if $\mathscr{L}(p) = \mathscr{L}(q)$.
      \end{fact}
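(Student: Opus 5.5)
The plan is to prove both directions by working directly with the concrete description of $\Gamma(p)$ for a pDFA. There, the node set is exactly $\mathscr{L}(p) \subseteq A^*$, the root is $\varepsilon$, and the non-inverse edges are precisely $\aedge{w}{a}{wa}$ for $w, wa \in \mathscr{L}(p)$. Since $\mathscr{L}(p)$ determines both the node set and these edges (with the inverse edges added by involutive closure), the direction \enquote{$\mathscr{L}(p) = \mathscr{L}(q) \Rightarrow$ isomorphic} is immediate: the two rooted involutive graphs are literally equal, so the identity map is a rooted isomorphism.

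For the converse, assume $\varphi\colon \Gamma(p) \to \Gamma(q)$ is a rooted $A$-isomorphism. I will show by induction on $|w|$ that $\varphi(w) = w$ for every $w \in \mathscr{L}(p)$; this gives $\mathscr{L}(p) \subseteq \mathscr{L}(q)$, and applying the same argument to $\varphi^{-1}$ gives the reverse inclusion. The base case is $\varphi(\varepsilon) = \varepsilon$, which holds because $\varphi$ is rooted. For the step, let $wa \in \mathscr{L}(p)$ with $\varphi(w) = w$. The edge $\aedge{w}{a}{wa}$ is mapped to an edge $\aedge{w}{a}{\varphi(wa)}$ in $\Gamma(q)$. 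Because $\varphi$ preserves distance from the root, $\varphi(wa)$ has level $|w| + 1$, so it is a child of $w$. The only child of $w$ reached by an $a$-labelled edge starting at $w$ is $wa$, so $\varphi(wa) = wa$.

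The main obstacle is this last claim, namely that the $a$-edge lands on $wa$ rather than on some other child. The difficulty is that the alphabet is involutive: the inverse of a child edge $\aedge{w}{b}{wb}$ is an edge $\aedge{wb}{b^{-1}}{w}$, which starts at the child, not at $w$. So edges starting at $w$ are of exactly two kinds. The first kind are the child edges $\aedge{w}{b}{wb}$, which end at level $|w|+1$. The second kind is the single parent edge coming from an inverse, which ends at level $|w|-1$. Since we have already established that $\varphi(wa)$ lies at level $|w|+1$, any $a$-labelled edge from $w$ to $\varphi(wa)$ must be a child edge, hence of the form $\aedge{w}{a}{wa}$. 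Uniqueness of this edge then follows from determinism of the pDFA, because $\Gamma(q)$ has at most one child edge from $w$ labelled $a$. I will state this level argument explicitly, because a parent edge could carry the label $a$ when $a = b^{-1}$ for the last letter $b$ of $w$; that is why the level information is needed and determinism alone is not enough.
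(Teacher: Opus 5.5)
Your proof is correct, but it takes a different route from the paper.

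\textbf{The paper's route.} The paper proves, by induction on $\ell$ (implicitly for all pairs of states at once), that $\Gamma^{\leq \ell}(p) \cheq \Gamma^{\leq \ell}(q)$ if and only if $\mathscr{L}(p) \cap A^{\leq \ell} = \mathscr{L}(q) \cap A^{\leq \ell}$. It decomposes $\Gamma^{\leq \ell + 1}(p)$ into the root and the subtrees $a\Gamma^{\leq \ell}(p \cdot a)$ for $a \in \out p$. It then matches this with $\mathscr{L}(p) = \{\varepsilon\} \cup \bigcup_{a \in \out p} a\mathscr{L}(p \cdot a)$. So it recurses on pairs of states $(p \cdot a, q \cdot a)$ and only ever inspects edges leaving a root, where no parent edge exists.

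\textbf{Your route.} You work with the words-as-nodes picture throughout. The \enquote{if} direction becomes literal equality of the two graphs. For \enquote{only if} you show that any rooted isomorphism fixes every word, using level preservation to rule out the inverse parent edge. As you correctly note, this is genuinely needed: the Fact is stated for arbitrary, possibly non-reduced pDFAs, whose trees need not be deterministic.

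\textbf{What each buys.} Your version is shorter and yields slightly more: the rooted isomorphism is unique, since it is the identity on $\mathscr{L}(p)$. It also sidesteps a step the paper leaves implicit, namely passing from isomorphic discs of every radius to an isomorphism of the whole infinite trees. The paper's version, in turn, is phrased in terms of $\out$-sets and pairs of states $p \cdot a$, $q \cdot a$. That is exactly the form in which the criterion is used later in the $\NL$-algorithms for the non-rooted case.
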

      \begin{proof}
        We will write $\Gamma \cheq \Delta$ to indicate that $\Gamma$ and $\Delta$ are isomorphic as rooted, involutive graphs (without node labels).
        We further write $\Gamma^{\leq \ell}(p)$ ($\Gamma^{\leq \ell}(q)$) for the disc of radius $\ell$ of $\Gamma(p)$ (of $\Gamma(q)$) and show that $\Gamma^{\leq \ell}(p) \cheq \Gamma^{\leq \ell}(q)$ if and only if $\mathscr{L}(p) \cap A^{\leq \ell} = \mathscr{L}(q) \cap A^{\leq \ell}$ for all $\ell$ by induction.
        
        For $\ell = 0$, we have that $\Gamma^{\leq 0}(p)$ and $\Gamma^{\leq 0}(q)$ consist only of $p$ and $q$, respectively, and are, thus, isomorphic.
        Similarly, we have $\mathscr{L}(p) \cap A^{\leq 0} = \{ \varepsilon \} = \mathscr{L}(q) \cap A^{\leq 0}$.
        
        For the inductive step from $\ell$ to $\ell + 1$, we observe that $\Gamma^{\leq \ell + 1}(p)$ consists of the union $a\Gamma^{\leq \ell}(p \cdot a)$ for all $a \in \out p$ and the additional node $\varepsilon$ with the edges $\{ \aedge{\varepsilon}{a}{a} \mid a \in \out p \}$ (and that an analogous statement holds for $\Gamma^{\leq \ell + 1}(q)$).
        We, thus, have $\Gamma^{\leq \ell + 1}(p) \cheq \Gamma^{\leq \ell + 1}(q)$ if and only if $\out p = \out q$ and $a\Gamma^{\leq \ell}(p \cdot a) \cheq \Gamma^{\leq \ell}(p \cdot a) \cheq \Gamma^{\leq \ell}(q \cdot a) \cheq a\Gamma^{\leq \ell}(q \cdot a)$ for all $a \in \out p = \out q$.
        By induction, we have $\Gamma^{\leq \ell}(p \cdot a) \cheq \Gamma^{\leq \ell}(q \cdot a)$ if and only if $\mathscr{L}(p \cdot a) \cap A^{\leq \ell} = \mathscr{L}(q \cdot a) \cap A^{\leq \ell}$ (where $a \in \out p \cap \out q$), which is the case if and only if $a\mathscr{L}(p \cdot a) \cap A^{\leq \ell + 1} = a\mathscr{L}(q \cdot a) \cap A^{\leq \ell + 1}$.
        
        Summing this up, we have $\Gamma^{\leq \ell + 1}(p) \cheq \Gamma^{\leq \ell + 1}(q)$ if and only if $\out p = \out q$ and $a\mathscr{L}(p \cdot a) \cap A^{\leq \ell + 1} = a\mathscr{L}(q \cdot a) \cap A^{\leq \ell + 1}$ for all $a \in \out p = \out q$.
        
        If both conditions are true, this implies
        \begin{align*}
          \mathscr{L}(p) \cap A^{\leq \ell + 1} &=
          \{ \varepsilon \} \cup \bigcup_{a \in \out p} \left( a\mathscr{L}(p \cdot a) \cap A^{\leq \ell + 1} \right) \\
          &= \{ \varepsilon \} \cup \bigcup_{a \in \out q} \left( a\mathscr{L}(q \cdot a) \cap A^{\leq \ell + 1} \right) \\
          &= \mathscr{L}(q) \cap A^{\leq \ell + 1} \text{.}
        \end{align*}
        For the other direction assume $\mathscr{L}(p) \cap A^{\leq \ell + 1} = \mathscr{L}(q) \cap A^{\leq \ell + 1}$.
        This implies, in particular, $\out p = \mathscr{L}(p) \cap A = \mathscr{L}(q) \cap A = \out q$.
        We obtain $a\mathscr{L}(p \cdot a) \cap A^{\leq \ell + 1} = a\mathscr{L}(q \cdot a) \cap A^{\leq \ell + 1}$ for each $a \in \out p = \out q$ by intersecting $\mathscr{L}(p)$ and $\mathscr{L}(q)$ with $aA^{\leq \ell} \subseteq A^{\leq \ell + 1}$, respectively.
      \end{proof}
      
      \paragraph*{Reduced pDFAs.}
      One is tempted to believe that the graphs $\Gamma(p)$ for states $p$ of pDFAs belong precisely to the deterministic regular trees.
      The involutive nature of the $\Gamma(p)$ requires an additional property of the pDFA, however.
      
      Consider a pDFA over a finite alphabet $A$.
      We may always assume that $A$ is involutive (potentially by passing to $A^{\pm 1}$) and say that the language of a state is \emph{reduced} if it does not contain any word $w$ with $a a^{-1}$ (for any $a \in A$) as a factor.
      A pDFA is \emph{reduced} if the languages of all of its states are, which is equivalent to it not containing a path labeled with $a a^{-1}$ for any $a \in A$.
      \begin{remark*}
        Note that we cannot have a path labeled with $a a^{-1}$ (or $a a^{-1}$) in a reduced pDFA but we can absolutely have a state with two distinct incoming $a$-transitions!
        Thus, being reduced as a pDFA is different to having a deterministic involutive closure.
      \end{remark*}
      \begin{example}\label{ex:pDFAs}
        The automaton in \autoref{fig:ex1} is not deterministic and, thus, not a pDFA; the associated tree is not deterministic.
        The automaton in \autoref{fig:exInvolutive} is deterministic but not reduced (and the associated tree remains not deterministic due to the involutive closure).
        The automaton in \autoref{fig:ex2} finally is a reduced pDFA and $\Gamma(p)$ is a deterministic involutive graph.
      \end{example}
      \begin{theorem}\label{thm:detContextFreeIspDFA}
        Let $\Gamma$ be an involutive tree.
        Then $\Gamma$ is deterministic and context-free if and only if $\Gamma$ is isomorphic (as a rooted, involutive, $A$-graph without node labels) to $\Gamma(p)$ for some state $p$ of a reduced pDFA.
      \end{theorem}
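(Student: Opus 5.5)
We prove the two directions separately, using \autoref{thm:contextFreeIsRegular} together with the characterization of pDFAs among mNFAs in \autoref{fct:pDFAandmNFA}.

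\emph{From a reduced pDFA to a deterministic context-free tree.} Let $p$ be a state of a reduced pDFA $\mathcal{D}$. Viewing $\mathcal{D}$ as an mNFA, $\Gamma(p)$ is regular, so it is context-free by \autoref{thm:contextFreeIsRegular}. It remains to show that $\Gamma(p)$ is deterministic. Its nodes are words $u \in \mathscr{L}(p)$, and the edges leaving $u$ are of two kinds:
\begin{itemize}
  \item forward edges $\aedge{u}{a}{ua}$ for $a \in \out(p \cdot u)$, which have pairwise distinct labels;
  \item for $u = u'b$ with $u \neq \varepsilon$, exactly one backward edge $\aedge{u}{b^{-1}}{u'}$.
\end{itemize}
So determinism can only fail if the backward label $b^{-1}$ is also the label of a forward edge at $u$. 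That would require $b^{-1} \in \out(p \cdot u'b)$, giving a path labeled $bb^{-1}$ in $\mathcal{D}$, which reducedness forbids.

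\emph{From a deterministic context-free tree to a reduced pDFA.} Let $\Gamma$ be deterministic and context-free. Fix a root $r$ and apply the construction from the converse direction of \autoref{thm:contextFreeIsRegular}. This yields an mNFA with states $1, \dots, n$ indexing the end-isomorphism classes, with $\Gamma \cong \Gamma(1)$. The transitions out of state $i$ correspond bijectively to the edges leaving $r_i$ inside the end-cone $\Gamma(r_i)$, that is, to the edges going to the next level.

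\begin{itemize}
  \item \emph{The mNFA is a pDFA.} Since $\Gamma$ is deterministic, these edges have pairwise distinct labels. Hence no two transitions share both start state and label, and \autoref{fct:pDFAandmNFA} shows that the mNFA comes from a pDFA.
  \item \emph{The pDFA is reduced.} Suppose there were a path $\aedge{i}{a}{j}\,\aedge{}{a^{-1}}{k}$. Take any node $v$ with label $i$; the representative $r_i$ itself works. The edge $\aedge{v}{a}{w}$ into the end-cone gives a child $w$ whose end-cone is end-isomorphic to $\Gamma(r_j)$. So $w$ has an outgoing $a^{-1}$-edge to a child, at level $\ell(w)+1$. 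But $w$ also has the inverse edge $\aedge{w}{a^{-1}}{v}$ back to its parent, at level $\ell(w)-1$. These are two distinct $a^{-1}$-edges leaving $w$, contradicting determinism.
\end{itemize}

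\emph{Expected obstacle.} The delicate point is the bookkeeping in the reducedness step. We must be sure that the transitions from $j$ correspond exactly to edges from $w$ to its children in $\Gamma$. Concretely, we need the end-isomorphism $\Gamma(r_j) \to \Gamma(w)$ to map root to root, which holds because in a tree the unique frontier point is the root. We also need the edges inside the end-cone at its root to be precisely the edges to the next level. Both follow from the preceding Fact and the remark after it, so only care is required here.

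Finally, the theorem concerns isomorphism without node labels. Both directions only produce or consume rooted $A$-isomorphisms, and by \autoref{prop:regularityIsRootless} the choice of root is immaterial, so the statement follows.
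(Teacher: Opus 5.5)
Your proof is correct and is essentially the paper's argument. The direction from a reduced pDFA uses the same parent/child case analysis, and in the other direction you reach the same two contradictions (two equally labeled edges leaving one node, and an $a^{-1}$-edge to a child clashing with the inverse $a^{-1}$-edge back to the parent). The only difference is the choice of mNFA: you work with the end-isomorphism-class mNFA built in the proof of \autoref{thm:contextFreeIsRegular} and use the representatives $r_i$ as witnesses, whereas the paper takes an arbitrary mNFA trimmed to the states reachable from $p$ and uses runs $\varrho$ as witnesses. Your variant avoids the trimming but relies on the internals of that proof rather than its statement.

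For the root, it is cleaner to take $r$ to be the designated root of $\Gamma$ directly, which is allowed because context-freeness does not depend on the root. Invoking \autoref{prop:regularityIsRootless} instead produces a new mNFA that you would still have to check is a reduced pDFA.
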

      \begin{proof}
        First, suppose that $\Gamma$ is deterministic and context-free.
        By \autoref{thm:contextFreeIsRegular}, $\Gamma$ is then isomorphic (as a rooted, involutive $A$-graph) to $\Gamma(p)$ for some state $p$ of an mNFRA $\mathcal{A}$ over the finite involutive alphabet $A$.
        Without loss of generality, we may assume that all states of $\mathcal{A}$ are reachable from $p$.
        
        We first show that $\mathcal{A}$ comes from a pDFA using \autoref{fct:pDFAandmNFA}.
        Consider two transitions $\tau$ and $\tau'$ of $\mathcal{A}$ with $\alpha(\tau) = \alpha(\tau') = q$ and $\lambda(\tau) = \lambda(\tau') = a$.
        We need to show $\tau = \tau'$.
        By our assumption, the state $q$ is reachable from $p$ via some run $\varrho$.
        This means that we have a path from $\varepsilon$ to $\varrho$ in $\Gamma(p)$ (labeled by $\lambda(\varrho)$).
        We further have the edges $\aedge{\varrho}{a}{\varrho\tau}$ and $\aedge{\varrho}{a}{\varrho\tau'}$ in $\Gamma(p)$.
        The determinism of $\Gamma(p)$ now yields $\varrho\tau = \varrho\tau'$ and, thus, $\tau = \tau'$.
        
        It remains to show that the pDFA is reduced (for this direction of the proof).
        Suppose to the contrary that we have transitions $\tau_1 = \aedge{q_1}{a}{q_2}$ and $\tau_2 = \aedge{q_2}{a^{-1}}{q_3}$ (for any $a$ from the involutive alphabet $A$).
        Since we may reach $q_1$ from $p$ via a run $\varrho$, we obtain a path from $\varepsilon$ to $\varrho$ in $\Gamma(p)$ together with the edges $\aedge{\varrho}{a}{\varrho\tau_1}$ and $\aedge{\varrho\tau_1}{a^{-1}}{\varrho\tau_1\tau_2}$.
        Since $\Gamma(p)$ is involutive, we must also have the edge $\aedge{\varrho\tau_1}{a^{-1}}{\varrho}$ in $\Gamma(p)$, which yields a contradiction since $\varrho \neq \varrho\tau_1\tau_2$ and $\Gamma(p)$, thus, cannot be deterministic.
        
        For the converse direction consider a state $p$ of a reduced pDFA $\mathcal{A}$ with state set $Q$.
        Since we may consider a pDFA as a special case of an mNFA, we obtain from \autoref{thm:contextFreeIsRegular} that $\Gamma(p)$ is context-free and it remains to show that it is deterministic.
        Suppose we have edges $\laedge{v}{a}{u}$ and $\aedge{u}{a}{v'}$ in $\Gamma(p)$.
        We need to show $v = v'$.
        There are two possible cases: $u$ is the parent of $v$ and $v'$, or $v$ is the parent of $u$ which is the parent of $v'$ (the case with swapped $v$ and $v'$ is symmetric).
        
        In the first case, we have $v = ua = v'$ (by the construction of $\Gamma(p)$ for pDFAs) and are done.
        In the second case, let $q$ be the label of $v$, $r$ be the label of $u$ and $q'$ be the label of $v'$.
        We must have the edge $\laedge{v}{a}{u}$ because of the edge $\aedge{v}{a^{-1}}{v'}$ in the construction of $\Gamma(p)$ (since $v$ is the parent of $u$).
        This only is created by a transition $\aedge{q}{a^{-1}}{r}$ in $\mathcal{A}$.
        Similarly, the edge $\aedge{u}{a}{v}$ is only created by a transition $\aedge{r}{a}{q'}$ in $\mathcal{A}$.
        This implies $a^{-1}a \in \mathscr{L}(q)$, which is not possible since $\mathcal{A}$ is reduced.
      \end{proof}
      \begin{remark}\label{rem:pDFADescription}
        Just like roots of mNFAs are a natural finite description of (general) context-free involutive graphs, we obtain from \autoref{thm:detContextFreeIspDFA} that roots of reduced pDFAs are a finite form of describing deterministic context-free trees (where we require the state to be a root since non-reachable parts of the pDFA do not contribute to the generated tree).
        In some sense, this description is even more natural than the mNFA one for general context-free graphs since it does cover the fact that the tree is deterministic.
        We will, therefore, use this as the description of our choice whenever we need to algorithmically encode deterministic context-free trees.
      \end{remark}
    \end{subsection}
  \end{section}
  
  \begin{section}{The Isomorphism Problem}
    \begin{subsection}{The Rooted Case}\label{ssct:rootedIso}
      We first show that the rooted isomorphism problem for deterministic context-free trees\footnote{Recall that a root of an automaton is a state from which all other states are reachable.}
      \problem{
        a reduced pDFA $\mathcal{A}$ with root $\check{p}$ and\newline
        a reduced pDFA $\mathcal{B}$ with root $\check{q}$%
      }{%
        are $\Gamma(\check{p})$ and $\Gamma(\check{q})$ isomorphic as rooted involutive graphs?
      }\noindent
      is $\NL$-complete.
      In fact, we show that its complement is $\NL$-complete, from which the statement follows by the theorem of Immerman and Szelepcsényi \cite{immerman88, szelepcsenyi87} (also e.\,g.\ \cite[Theorem~7.6]{papadimitriou97computational}).
      By \autoref{fct:isoIsLang}, this is equivalent to showing that checking $\mathscr{L}(\check{p}) \neq \mathscr{L}(\check{q})$ for the same input is $\NL$-complete.
      
      For the membership in $\NL$, we first observe that we may assume both input automata to have a common alphabet $A$ since every pDFA over $A$ is, in particular, also one over $A \cup B$.
      The membership in $\NL$ can then easily be seen by guessing a witness in the symmetric difference of the two languages letter by letter.
      For this, we only need to store the current state $p$ in $\mathcal{A}$ and the current state $q$ in $\mathcal{B}$ and advance them to $p \cdot a$ and $q \cdot a$ for every guessed letter $a \in A$ (if possible).
      If we reach the situation that we can read $a$ only in one of them (but not the other), we have found a word in the symmetric difference and accept.
      
      For $\NL$-hardness, we give a $\LogSpace$-reduction from a version of the graph accessibility problem.
      The main issue here is that we would like to have all states accessible; i.\,e.\ the inputs $\check{p}$ and $\check{q}$ should be roots in their automata (compare to \autoref{rem:pDFADescription}).
      To achieve this, we use an approach similar to \cite[Lemma~2.2]{cho1992parallel}.
      \begin{proposition}
        The rooted isomorphism problem for deterministic context-free trees is $\NL$-hard.
      \end{proposition}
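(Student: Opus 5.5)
We reduce from a variant of graph accessibility to the complement problem, i.e.\ deciding $\mathscr{L}(\check{p}) \neq \mathscr{L}(\check{q})$; by \autoref{fct:isoIsLang} and the theorem of Immerman and Szelepcsényi this gives $\NL$-hardness of the isomorphism problem itself. As the source problem we take reachability in directed acyclic graphs with out-degree at most two, topologically ordered vertices $1, \dots, n$, source $s = 1$ and target $t = n$. This problem is well known to be $\NL$-complete, and acyclicity is exactly what keeps languages under control. Following \cite[Lemma~2.2]{cho1992parallel}, the difficulty is that all states must be reachable from the root. We handle this by making every vertex reachable from the start state along a ``bypass'' that cannot distinguish the two automata.

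\textbf{Construction.} Use an alphabet $A = \{a, b, c, d, x\}$ and its formal inverses $A^{-1}$, but never use inverse letters in transitions. Then every pDFA we build is automatically reduced, and $\Gamma(\cdot)$ is a deterministic context-free tree by \autoref{thm:detContextFreeIspDFA}. Both $\mathcal{A}$ and $\mathcal{B}$ have states $0, 1, \dots, n$, with root $0$. In both automata the $i$-th outgoing edge of vertex $v$ becomes a transition labelled $a$ or $b$, respectively, so states $1, \dots, n$ form the same DAG in each. Both automata contain the transitions $\aedge{0}{c}{0}$ and $\aedge{0}{d}{1}$. In addition, $\aedge{0}{c}{\cdot}$ is made to fan out: from $0$ we add, for every vertex $v$, a path to $v$ whose label is a prefix code word $x^{v}$ over a fresh letter. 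Concretely, introduce auxiliary states $z_1, \dots, z_n$ with $\aedge{0}{x}{z_1}$, $\aedge{z_i}{x}{z_{i+1}}$ and $\aedge{z_i}{d}{i}$. This makes every state reachable in both automata. The only difference between the two automata is at the target: in $\mathcal{A}$ we add a transition $\aedge{n}{c}{n}$, or simply one outgoing letter $c$ into a sink state, which is not present in $\mathcal{B}$. Since the graph is acyclic and all added structure is identical, the map is clearly $\LogSpace$-computable. It only enumerates edges and indices. The auxiliary states make the construction slightly larger, but the roots $0$ remain roots.

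\textbf{Correctness.} This is the step needing care. The problem is that $n$ is now reachable from the root via the bypass $x^n d$ in \emph{both} automata, and there state $n$ differs. So the naive construction always yields different languages. To fix this, the bypass must enter a copy of the graph that has no marker. Concretely, duplicate the vertex states into a marked copy (entered only by $d$ from $0$, i.e.\ from $s$) and an unmarked copy (entered from the $z_i$). Accessibility is still needed for the marked copy, which is exactly the obstacle of \cite{cho1992parallel}. The remedy I would try first is to give each marked vertex $v$ a distinct extra outgoing letter. Let $v$ carry an $x$-transition starting a chain that encodes $v$, and do the same in both automata. Then reach marked vertex $v$ from the root via a word that also passes through an unmarked copy with the identical encoding. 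The goal is the invariant: for every word $w$, the states reached in $\mathcal{A}$ and $\mathcal{B}$ are \enquote{the same up to marking}, and they differ in out-sets only when $w$ reads a genuine $s$-$t$ path into the marked $t$. Equivalently, $\mathscr{L}(0_{\mathcal{A}}) \setminus \mathscr{L}(0_{\mathcal{B}}) = \{ d\,w\,c : w \text{ labels an } s\text{-}t \text{ path}\}$ and the other difference is empty.

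\textbf{Main obstacle.} I expect the obstacle to be getting accessibility of the marked copy without creating a spurious difference. The marked target must be reachable even when $t$ is not reachable from $s$, yet its marker $c$ may only be visible along genuine paths. The plan for this is to let the $c$-marker not be a local property of state $t$. Instead, track through a second coordinate whether we entered through $d$, via a product with a two-state \enquote{entered correctly} flag. This can only be done cheaply if the flag is reset by bypass letters, and that is the detail I would verify carefully.
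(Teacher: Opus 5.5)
There is a genuine gap, and it is exactly where you suspected. As you note, the construction you actually write down (the marker $\aedge{n}{c}{n}$ only in $\mathcal{A}$) gives different languages on every instance. Your suggested repairs (marked/unmarked copies, distinct encodings, an \enquote{entered correctly} flag) cannot fix this, because the obstruction does not depend on how the marked state is reached.

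Suppose $\mathcal{A}$ and $\mathcal{B}$ agree except that one state $m$ (your marked $t$, or a product state such as $(t,1)$) has an extra outgoing $c$-transition in $\mathcal{A}$. The input format forces $m$ to be reachable from the root on \emph{every} instance. You cannot make it reachable only on positive instances, since that would mean deciding the reachability problem you are reducing from. Let $w$ label a shortest run from the root to $m$ in $\mathcal{A}$. This run never leaves $m$, so it does not use the extra transition, and $w$ leads to $m$ in $\mathcal{B}$ as well. Hence $wc$ lies in the symmetric difference whether or not $t$ is reachable from $s$. Your target invariant is unattainable whenever the two automata differ by a marker at a state that must be accessible.

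The missing idea is to put the reachability-sensitive structure into \emph{both} automata and let them differ only in where a single transition leads. You then compare the language of $s$ with a fixed reference language. The paper does this as follows.
\begin{itemize}
\item It completes the graph over $\{0,1\}$: it labels or duplicates edges and adds $0$- and $1$-self-loops at every dead end except the target, which reads nothing. Thus $\mathscr{L}(s) = \{0,1\}^*$ if and only if the target is unreachable from $s$.
\item It makes all states accessible via a complete binary tree of depth $\ell$ (with $n = 2^\ell$) whose leaves are the vertices.
\item It obtains $\mathcal{B}$ by replacing the leaf for $s$ with a state $f$ carrying both self-loops.
\end{itemize}
Since $s$ has no incoming edges, every vertex state other than $s$ has the same language in both automata. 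So the root languages differ exactly when $\mathscr{L}(s) \neq \mathscr{L}(f)$, i.e.\ exactly when there is an $s$-$t$ path. Your $x^v d$ bypass would serve just as well as the binary tree.

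If you want to stay closer to your marker idea, keep a marked and an unmarked copy of the graph in \emph{both} automata, each made accessible by its own bypass letter. Then let $\mathcal{A}$ and $\mathcal{B}$ differ only in whether the $d$-transition from the root enters $s$ in the marked or in the unmarked copy.
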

      \begin{proof}
        We reduce the problem \DecProblem{2GAP}
        \problem
          {a finite unlabeled directed graph $\Gamma$ with node set $V = \{ 0, \dots, n - 1\}$ where every node has at most two outgoing edges}
          {is there a directed path from $0$ to $n - 1$ in $\Gamma$?}\noindent
        to the complement of the isomorphism problem.
        One can see that \DecProblem{2GAP} is $\NL$-hard using a master reduction very similar to the classical one to show that the (general) accessibility problem for directed graphs is $\NL$-hard (see e.\,g.\ \cite[Theorem~16.2]{papadimitriou97computational}):
        Instead of starting with an arbitrary $\NL$-machine and then using its configuration graph, we may assume the machine to only have at most two successor configurations.
        This is clearly not a restriction compared to general nondeterminism as a bigger number of nondeterministic branching may simply be achieved by using multiple steps.
        
        For the reduction, consider some $\Gamma$ as given in the input for \DecProblem{2GAP}.
        Without loss of generality, we may assume that $n = 2^\ell$ is a power of two (e.\,g.\ by adding additional states numbered from $1$ to the required amount).
        We may also assume that $0$ has no incoming edges (as dropping those does not affect the existence of a path from $0$ to $n - 1$) and that $n - 1$ does not have any outgoing edges (for the same reason).
        
        We do the reduction in two steps.
        First, we extend $\Gamma$ into a pDFA $\mathcal{A}'$ over the binary alphabet $A = \{ 0, 1 \}$ by adding labels to all edges.
        If a state $i$ has two outgoing edges to the states $j$ and $k$ with $j \leq k$, respectively, we label the edge to $j$ with $0$ and the edge to $k$ with $1$.
        If it has only a single outgoing edge, we duplicate this and label one copy with $0$ and the other copy with $1$.
        If a node $i < n - 1$ does not have any outgoing edges, we add two self-loops $\aedge{i}{0}{i}$ and $\aedge{i}{1}{i}$.
        
        We observe that we have a run starting in $0$ and ending in $n - 1$ in $\mathcal{A}'$ if and only if there was a path from $0$ to $n - 1$ in $\Gamma$.
        Further observe that we can read $0$ and $1$ in all states except for $n - 1$ (where we can read neither).
        Thus, we have $\mathscr{L}(0) = \{ 0, 1 \}^*$ if and only if there is no path from $0$ to $n - 1$ in $\Gamma$.
        
        Note that $\mathcal{A}'$ only uses positive letters from $A$ and is, thus, reduced (as a pDFA over the involutive alphabet $A^{\pm 1}$).
        Also note that $\mathcal{A}'$ can clearly be computed by a deterministic $\LogSpace$-transducer.
        
        We could now use $\mathcal{A}'$ as our first pDFA with state $0$ and construct a second pDFA with a single state and two self-loops for $0$ and $1$ for our reduction, but this has the problem that $n - 1$ is not always accessible from $0$ in $\mathcal{A}'$ and $0$ is, thus, not always a root (which we required for inputs to the isomorphism problem).
        To overcome this, we do the second step of our reduction:
        First, rename every state $i$ in $\mathcal{A}'$ to the binary representation $\operatorname{bin}_\ell i$ of $i$ padded to length $\ell$ with leading zeros.
        For each $w \in \{ 0, 1 \}^{< \ell}$, we add a new state $w$ and the transitions $\aedge{w}{0}{w0}$ and $\aedge{w}{1}{w1}$.
        Denote the thus constructed pDFA by $\mathcal{A}$.
        Observe that every state of $\mathcal{A}$ is accessible from $\varepsilon$ and that $\mathcal{A}$ can be constructed in logarithmic space.
        Further observe that $\mathcal{A}$ remains reduced.
        
        For $\mathcal{B}$ we use a copy of $\mathcal{A}$ where we replace the state $\operatorname{bin}_\ell 0 = 0^\ell$ by a different state $f$ with the transitions $\aedge{f}{0}{f}$ and $\aedge{f}{1}{f}$.
        As the states $p$ in $\mathcal{A}$ and $q$ in $\mathcal{B}$, we use $\varepsilon$ in both cases.
        
        Now, observe that the languages $\mathscr{L}(p)$ and $\mathscr{L}(q)$ can only differ due to the possible runs starting in $0^\ell$ and $f$, respectively (which are reached by reading $0^\ell$ from $p$ and $q$, respectively).
        The language of $f$ is, by construction, $\{ 0, 1 \}^*$ and the language of $0^\ell$ is $\{ 0, 1 \}^*$ if and only if there is no path from $0$ to $n - 1$ in $\Gamma$.
        Thus, we have $\mathscr{L}(p) \neq \mathscr{L}(q)$ if and only if there is no such path, which concludes our proof.
      \end{proof}
      
    \end{subsection}
    \begin{subsection}{The Non-Rooted Case}
      We will next show that the non-rooted isomorphism problem for deterministic context-free trees
      \problem{
        a reduced pDFA $\mathcal{A}$ with state $\check{p}$ and\newline
        a reduced pDFA $\mathcal{B}$ with state $\check{q}$%
      }{%
        are $\Gamma(\check{p})$ and $\Gamma(\check{q})$ isomorphic as non-rooted involutive graphs?
      }\noindent
      is also $\NL$-complete. The result does not change if we additionally assume all states of their respective automtaton to be reachable from $\check{p}$ and $\check{q}$.
      
      \paragraph*{The $\NL$-Algorithm.}
      In order to describe the $\NL$-Algorithm, we fix reduced pDFAs $\mathcal{A} = (P, S)$ and $\mathcal{B} = (Q, T)$ (with $P \cap Q = \emptyset$) over a (without loss of generality; see above) common finite alphabet $A$ and states $\check{p} \in P$ and $\check{q} \in Q$.
      We also introduce the symbol $\check{=}$ to indicate that two $A$-graphs are isomorphic as rooted graphs.
      Furthermore, if $v$ is a node of $\Gamma(s)$ for some state $s$ of a pDFA, we denote by $\Gamma(s, v)$ the (node-labled) rooted involutive tree where we have changed the root to be $v$.
      Also recall that $\Gamma(s) \setminus vA^*$ denotes the graph arising from $\Gamma(s)$ by removing the end-cone of $v$ (which can be considered as the subtree \enquote{below} $v$).
      
      For testing whether $\Gamma(\check{p})$ is isomorphic to $\Gamma(\check{q})$ as a non-rooted graph in $\NL$, we introduce two subroutines $U(p, q)$ and $U(p, q, b_0)$ with the parameters $p \in P$, $q \in Q$ and $b_0 \in A$.
      We will implement these subroutines as $\NL$-algorithms with the semantics given in \autoref{tbl:semantics} (compare also to \autoref{fig:schematicSemantics}).
      \begin{table}
        \begin{tabularx}{\linewidth}{lX}
          Routine & When does it return \texttt{\textbf{true}}? \\\hline
          $U(p, q)$ & There is a node $v$ labeled by $q$ in $\Gamma(\check{q})$ with $\Gamma({p}) \cheq \Gamma(\check{q}, v)$. \\
          $U(p, q, b_0)$ & There is a node $v$ labeled by $q$ in $\Gamma(\check{q})$ with $\Gamma({p}) \cheq \Gamma(\check{q}, v) \setminus vb_0A^*$.
        \end{tabularx}
        \caption{Semantics for the subroutines $U(p, q)$ and $U(p, q, b_0)$}\label{tbl:semantics}
      \end{table}%
      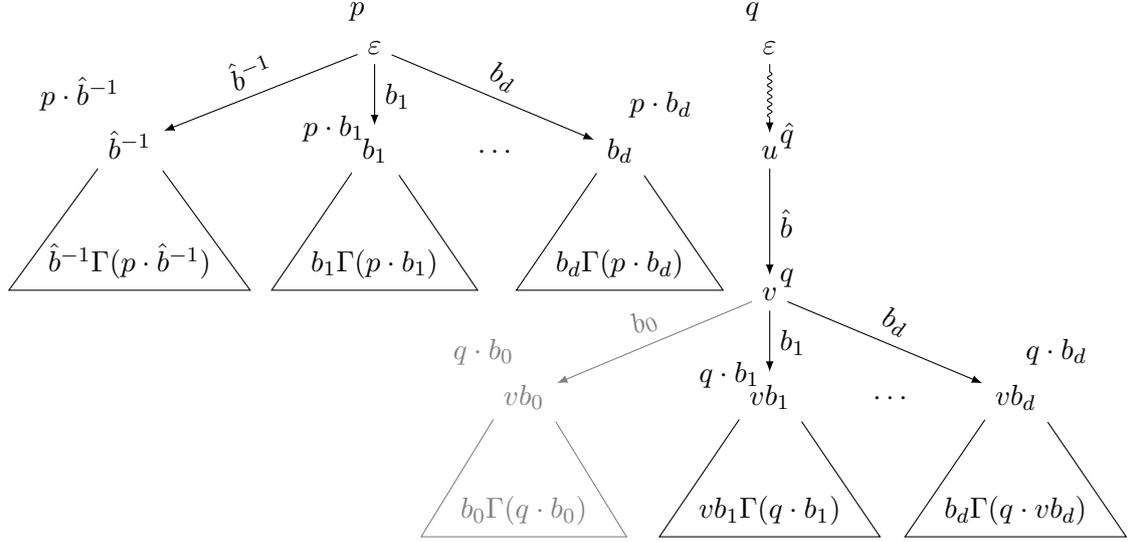
\begin{figure}\centering
        \begin{tikzpicture}[auto, shorten >=1pt, >=latex]
          \node[label={[above left]$p$}] (e) {$\varepsilon$};
          \node[label={[above left]$p\cdot\hat{b}^{-1}$}, below left=1.25cm and 3cm of e, anchor=base] (bi) {$\hat{b}^{-1}$};
          \node[label={[left, anchor=east]$p\cdot b_1$}, below=1.25cm of e, anchor=base] (b1) {$b_1$};
          \node[label={[above right]$p\cdot b_d$}, below right=1.25cm and 3cm of e, anchor=base] (bd) {$b_d$};
          \node[anchor=base] at ($(b1.base)!0.5!(bd.base)$) {$\cdots$};
          
          \path[->] (e) edge node[sloped] {$\hat{b}^{-1}$} (bi)
                        edge node[right] {$b_1$} (b1)
                        edge node[sloped] {$b_d$} (bd);
          
          \node[below=1.5cm of bi.base, anchor=base] (biT) {$\hat{b}^{-1}\Gamma(p\cdot\hat{b}^{-1})$};
          \draw (bi.south west) -- ($(biT.base west)+(-1em,-1.5ex)$) -- ($(biT.base east)+(1em,-1.5ex)$) -- (bi.south east);
          
          \node[below=1.5cm of b1.base, anchor=base] (b1T) {$b_1\Gamma(p\cdot b_1)$};
          \draw (b1.south west) -- ($(b1T.base west)+(-1em,-1.5ex)$) -- ($(b1T.base east)+(1em,-1.5ex)$) -- (b1.south east);
          
          \node[below=1.5cm of bd.base, anchor=base] (bdT) {$b_d\Gamma(p\cdot b_d)$};
          \draw (bd.south west) -- ($(bdT.base west)+(-1em,-1.5ex)$) -- ($(bdT.base east)+(1em,-1.5ex)$) -- (bd.south east);

          \node[label={[above left]$q$}, right=4.75cm of e] (qe) {$\varepsilon$};
          \node[label={[right]$\hat{q}$}, below=1.25cm of qe, anchor=base] (u) {$u$};
          \node[label={[right]${q}$}, below=1.75cm of u, anchor=base] (v) {$v$};
          
          \path[->] (qe) edge[decorate, decoration={snake, pre length=1mm, post length=2mm, amplitude=0.3mm, segment length=1mm}] (u);
          \path[->] (u) edge node {$\hat{b}$} (v);
          
          \begin{scope}[gray]
            \node[label={[above left]$q\cdot b_0$}, below left=1.25cm and 3cm of v, anchor=base] (vb0) {$vb_0$};
            \node[below=1.5cm of vb0.base, anchor=base] (vb0T) {$b_0\Gamma(q\cdot b_0)$};
            \draw (vb0.south west) -- ($(vb0T.base west)+(-1em,-1.5ex)$) -- ($(vb0T.base east)+(1em,-1.5ex)$) -- (vb0.south east);
            \path[->] (v) edge node[sloped] {$b_0$} (vb0);
          \end{scope}
          
          \node[label={[left, anchor=east]$q\cdot b_1$}, below=1.25cm of v, anchor=base] (vb1) {$vb_1$};
          \node[label={[above right]$q\cdot b_d$}, below right=1.25cm and 3cm of v, anchor=base] (vbd) {$vb_d$};
          \node[anchor=base] at ($(vb1.base)!0.5!(vbd.base)$) {$\cdots$};
          
          \path[->] (v) edge node[right] {$b_1$} (vb1)
                        edge node[sloped] {$b_d$} (vbd);

          \node[below=1.5cm of vb1.base, anchor=base] (vb1T) {$vb_1\Gamma(q\cdot b_1)$};
          \draw (vb1.south west) -- ($(vb1T.base west)+(-1em,-1.5ex)$) -- ($(vb1T.base east)+(1em,-1.5ex)$) -- (vb1.south east);
          
          \node[below=1.5cm of vbd.base, anchor=base] (vbdT) {$b_d\Gamma(q\cdot vb_d)$};
          \draw (vbd.south west) -- ($(vbdT.base west)+(-1em,-1.5ex)$) -- ($(vbdT.base east)+(1em,-1.5ex)$) -- (vbd.south east);
        \end{tikzpicture}
        \caption{Schematic drawing for the semantics of $U(p, q)$ and $U(p, q, b_0)$: The left tree must be isomorphic to the right tree by mapping $\varepsilon$ to $v$. For $U(p, q, b_0)$ we ignore the gray subtree for this isomorphism.}
        \label{fig:schematicSemantics}
      \end{figure}%
      In particular, $\Gamma(\check{p})$ and $\Gamma(\check{q})$ are isomorphic as non-rooted trees if $\Gamma(\check{p})$ is isomorphic as a rooted tree to $\Gamma(\check{q}, v)$ for some new root $v$ which has some label $q \in Q$.
      Thus, after proving the semantics from \autoref{tbl:semantics}, we have that $\Gamma(\check{p})$ and $\Gamma(\check{q})$ are isomorphic as non-rooted trees if and only if $U(\check{p}, q)$ returns \texttt{\textbf{true}} for some $q \in Q$.
      As we can guess $q \in Q$ nondeterministically, this means that we have solved the problem when we have implemented $U(p, q)$ as an $\NL$-algorithm with the required semantics.
      
      The implementation may be found in \autoref{alg:U}.
      Both algorithms are very similar and we give them as a single pseudo-code listing.
      Taking only the black part, we obtain the implementation for $U(p, q)$ and we obtain the implementation for $U(p, q, b_0)$ by additionally adding the gray parts.
      We use two special instructions to handle the nondeterminism of the algorithm:
      For a finite set $X$, we use \texttt{\textbf{guess}($X$)} to nondeterministically guess an element $x \in X$ (i.\,e.\ we have a new nondeterministic computational branch for each $x \in X$).
      With \texttt{\textbf{assert}} we fail on all nondeterministic branches for which the condition following the \texttt{\textbf{assert}} is not satisfied.
      This condition can in particular also be another $\NL$-algorithm.
      In this case, the computation survives on those branches for which the inner algorithm terminates and accepts.
      \begin{algorithm}
        \begin{pseudocode}
fun U($p \in P$, $q \in Q$|, $\color{gray}b_0 \in A$|):
  if $q = \check{q}$ and guess($ \{ $true, false$ \} $):(*@\label{ln:if} \hfill \normalfont{$\triangleright$ at the root we do not have a predecessor} @*)
    assert $\out p = \out q \textcolor{gray}{{}\setminus \{ b_0 \}}$(*@ \hfill \normalfont{$\triangleright$ make sure the outgoing transitions at $p$ match} @*)
    for $a \in \out p$:
      assert $\Gamma(p \cdot a) \cheq \Gamma(q \cdot a)$(*@ \hfill \normalfont{$\triangleright$ make sure all subtrees below the children are isomorphic} @*)
    return true
  else:
    $(\hat{q}, \hat{b}) \gets{}$guess($\{ (\hat{q}, \hat{b}) \mid \aedge{\hat{q}}{\hat{b}}{q} \text{ in } T \}$)(*@ \hfill \normalfont{$\triangleright$ guess a predecessor of $q$} @*)
    assert $\out p = \{ \hat{b}^{-1} \} \cup \textcolor{gray}{(} \out q \textcolor{gray}{{}\setminus \{ b_0 \} )}$(*@ \hfill \normalfont{$\triangleright$ make sure the outgoing transitions at $p$ match} @*)
    for $b \in \textcolor{gray}{(} \out q \textcolor{gray}{{}\setminus \{ b_0 \} )}$:
      assert $\Gamma(p \cdot b) \cheq \Gamma(q \cdot b)$(*@ \hfill \normalfont{$\triangleright$ make sure all subtrees below the children are isomorphic} @*)
    return U($p \cdot \hat{b}^{-1}$, $\hat{q}$, $\hat{b}$)(*@\label{ln:return} \hfill \normalfont{$\triangleright$ this is a tail recursion for the subtree below $\hat{b}^{-1}$ } @*)
        \end{pseudocode}
        \caption{Implementation of $U(p, q)$ and $U(p, q, b_0)$ as $\NL$-algorithms}\label{alg:U}
      \end{algorithm}
      
      Before diving deeper into the algorithm, we point out that, at any moment in time, we only need to store constantly many states from $P$ and $Q$ and letters from $A$.
      In particular, the recursion in \autoref{ln:return} is a tail recursion and we do not need to store any previous variable values at this point.
      Since we already know that checking $\Gamma(p) \cheq \Gamma(q)$ is possible in $\NL$ by \autoref{ssct:rootedIso}, this shows that the algorithm only requires logarithmic space.
      
      The algorithm is best understood via its correctness proof.
      For this, we define the \emph{recursion depth} of a computational branch as the number of times we have executed \autoref{ln:return} on it.
      We show a parameterized version of the semantics from \autoref{tbl:semantics}:
      \begin{itemize}[noitemsep]
        \item There is a node $v$ with label $q$ on level $\ell$ in $\Gamma(\check{q})$ such that $\Gamma({p}) \cheq \Gamma(\check{q}, v)$ if and only if $U(p, q)$ accepts on a computational branch of recursion depth $\ell$.
        \item There is a node $v$ with label $q$ on level $\ell$ in $\Gamma(\check{q})$ such that $\Gamma({p}) \cheq \Gamma(\check{q}, v) \setminus vb_0A^*$ if and only if $U(p, q, b_0)$ accepts on a computational branch of recursion depth $\ell$.
      \end{itemize}
      
      We first show the \enquote{only if} part of both statements in parallel by induction on $\ell$.
      First, let $\ell = 0$.
      Suppose we have $\Gamma({p}) \cheq \Gamma(\check{q}, v)$ for some $v$ with label $q$ on level $0$ in $\Gamma(\check{q})$.
      This means that we must have $v = \varepsilon$ (i.\,e.\ the root of $\Gamma(\check{q})$) with label $q = \check{q}$ and, thus, that $\Gamma({p}) \cheq \Gamma(\check{q})$.
      Therefore, on some computational branch, we enter the \texttt{\textbf{if}}-part on \autoref{ln:if}.
      Since $\Gamma({p})$ and $\Gamma(\check{q})$ are isomorphic as rooted trees, we have $\out {p} = \out q$ and that, for every $a \in \out p = \out q$, the subtrees $aA^*$ in $\Gamma(p)$ and in $\Gamma(\check{q})$ are isomorphic (as rooted trees).
      In other words, the two \texttt{\textbf{assert}} statements hold and we finally return \texttt{\textbf{true}} (without any recursion on \autoref{ln:return}).
      The proof for $U(p, q, b_0)$ is the same -- except that we need to ignore the subtree $b_0A^*$ in $\Gamma(\check{q})$.
      
      For the inductive step from $\ell - 1$ to $\ell$, assume $\Gamma(p)\cheq \Gamma(\check{q}, v)$ for the node $v$ on level $\ell$ of $\Gamma(\check{q})$ with label $q$.
      Let $u$ be the parent of $v$ in $\Gamma(\check{q})$ and let it be connected to $v$ by the edge $\aedge{u}{\hat{b}}{v}$; this implies $v = u \hat{b}$.
      It is on level $\ell - 1$ and labeled by some $\hat{q}$ such that there is a transition $\aedge{\hat{q}}{\hat{b}}{q}$ in $\mathcal{B}$ (i.\,e.\ in $T$).
      On some computational branch, we enter the \texttt{\textbf{else}}-branch for the \texttt{\textbf{if}} on \autoref{ln:if} and guess precisely this combination $\hat{q}, \hat{b}$.
      We may draw $\Gamma(p)$ and $\Gamma(\check{q}, v)$ (including the node labels) as in \autoref{fig:GammaPGammaCheckQV}
      \begin{figure}\centering
        \begin{tikzpicture}[auto, shorten >=1pt, >=latex]
          \node[label={[above left]$p$}] (e) {$\varepsilon$};
          \node[label={[above left]$p\cdot\hat{b}^{-1}$}, below left=1.25cm and 1.5cm of e, anchor=base] (bi) {$\hat{b}^{-1}$};
          \node[label={[above right]$p\cdot b$}, below right=1.25cm and 1.5cm of e, anchor=base] (b) {$b$};
          
          \path[->] (e) edge node[sloped] {$\hat{b}^{-1}$} (bi)
                        edge node[sloped] {$b$} (b);
          
          \node[below=1.5cm of bi.base, anchor=base] (biT) {$\hat{b}^{-1}\Gamma(p\cdot\hat{b}^{-1})$};
          \draw (bi.south west) -- ($(biT.base west)+(-1em,-1.5ex)$) -- ($(biT.base east)+(1em,-1.5ex)$) -- (bi.south east);
          
          \node[below=1.5cm of b.base, anchor=base] (bT) {$b\Gamma(p\cdot b)$};
          \draw (b.south west) -- ($(bT.base west)+(-1em,-1.5ex)$) -- ($(bT.base east)+(1em,-1.5ex)$) -- (b.south east);

          \node[label={[above right]$q$}, base right=6cm of e] (v) {$v = u\hat{b}$};
          \node[label={[above left]$\hat{q}$}, below left=1.25cm and 1.25cm of v, anchor=base] (u) {$u$};
          \node[label={[above right]$q\cdot b$}, below right=1.25cm and 1.25cm of v, anchor=base] (vb) {$vb$};
          
          \path[->] (u) edge node[sloped] {$\hat{b}$} (v)
                    (v) edge node[sloped] {$b$} (vb);

          \node[below=1.5cm of u.base, anchor=base] (uT) {$\Gamma(\check{q}, u) \setminus u\hat{b}A^*$};
          \draw (u.south west) -- ($(uT.base west)+(-1em,-1.5ex)$) -- ($(uT.base east)+(1em,-1.5ex)$) -- (u.south east);
          
          \node[below=1.5cm of vb.base, anchor=base] (vbT) {$vb\Gamma(q \cdot b)$};
          \draw (vb.south west) -- ($(vbT.base west)+(-1em,-1.5ex)$) -- ($(vbT.base east)+(1em,-1.5ex)$) -- (vb.south east);
        \end{tikzpicture}
        \caption{Drawing of $\Gamma(p)$ and $\Gamma(\check{q}, v)$. We have the edges and subtrees for all $b \in \out q$}
        \label{fig:GammaPGammaCheckQV}
      \end{figure}
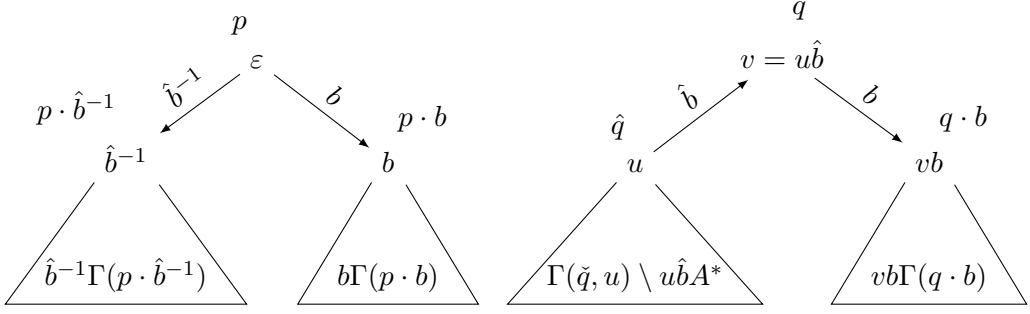
      For the drawing of $\Gamma(\check{q}, v)$, observe that the part to the top and the right of $u$ is precisely the substree $vA^* = u\hat{b}A^*$.
      The rest of the tree exists as a subtree below $u$.
      For the drawing of $\Gamma(p)$, observe:
      We have $\Gamma(p) \cheq \Gamma(\check{q}, v)$ and, thus, $\hat{b}^{-1} \in \out p$ (since $v$ has an outgoing $\hat{b}^{-1}$-edge to $u$ as an involutive graph) and even $\out p = \{ \hat{b}^{-1} \} \cup \out q$ (since they must have the same outgoing edges at the root as isomorphic graphs).
      This is also what we verify in the algorithm in the next step.
      As the trees are isomorphic, also their respective subtrees must be and we have, in particular, $b\Gamma(p \cdot b) \cheq vb\Gamma(q \cdot b)$ for all $b \in \out q$; this is also asserted in the algorithm.
      Finally, we also have $\Gamma(p \cdot \hat{b}^{-1}) \cheq \hat{b}^{-1}\Gamma(p \cdot \hat{b}^{-1}) \cheq \Gamma(\check{q}, u) \setminus u\hat{b}A^*$ for the same reason.
      As $u$ is on level $\ell - 1$ in $\Gamma(\check{q})$, we, thus, have by induction that $U(p \cdot \hat{b}^{-1}, \hat{q}, \hat{b})$ accepts on some computational branch of recursion depth $\ell - 1$, which yields the overall computational branch of recursion depth $\ell$ for $U(p, q)$.
      Again, the proof of $U(p, q, b_0)$ is the same except that we need to ignore the subtree $vb_0A^*$ below $v$ in all arguments and, thus, consider the set $\out q \setminus \{ b_0 \}$ instead of all of $\out q$ (which is precisely what the algorithm does).
      
      The other direction works in the same way:
      Suppose we have an accepting computational branch of recursion depth $0$ for $U(p, q)$.
      This can only happen if we enter the \texttt{\textbf{if}} at \autoref{ln:if}.
      This means that we must have $q = \check{q}$ and we claim that $\Gamma(p) = \Gamma(\check{q})$ (i.\,e.\ the sought node $v$ is the root $\varepsilon$).
      Indeed, we have that the outgoing edges at the root in $\Gamma(p)$ are the same as those at the root of $\Gamma(\check{q})$ since termination of the computation implies $\out p = \out q$.
      The subtrees reached by these edges, respectively, are also isomorphic (as rooted trees) since we must have $\Gamma(p \cdot a) \cheq \Gamma(\check{q} \cdot a)$ for all $a \in \out p$ by termination of the algorithm.
      This shows the claimed isomorphism.
      
      For the inductive step, consider an accepting computational branch for $U(p, q)$ of recursion depth $\ell > 0$.
      Here, we must enter the \texttt{\textbf{else}} branch of the \texttt{\textbf{if}} at \autoref{ln:if}.
      Let $\hat{q}$ be the chosen predecessor of $q$ in $\mathcal{B}$ (with the transition $\aedge{\hat{q}}{\hat{b}}{q} \in T$) on our computational branch.
      Since we have $\out p = \{ \hat{b}^{-1} \} \cup \out q$, we may draw $\Gamma(p)$ as in \autoref{fig:GammaPGammaCheckQV}.
      By induction (using \autoref{ln:return}), there is some $u$ on level $\ell - 1$ with label $\hat{q}$ in $\Gamma(\check{q})$ with $\hat{b}^{-1}\Gamma(p \cdot \hat{b}^{-1}) \cheq \Gamma(p \cdot \hat{b}^{-1}) \cheq \Gamma(\check{q}, u) \setminus u\hat{b}A^*$.
      We have $\hat{b} \in \out \hat{q}$ and, thus, a node $v = u\hat{b}$ with label $q$ as a child of $u$ in $\Gamma(\check{q})$.
      Note that $v$ is on level $\ell$ in $\Gamma(\check{q})$.
      The children of $v$ are (by construction) the nodes $vb$ with the subtrees $vb\Gamma(q \cdot b)$ for all $b \in \out q$.
      In other words, we can also draw $\Gamma(\check{q}, v)$ as in \autoref{fig:GammaPGammaCheckQV}.
      From induction, we already obtained $\hat{b}^{-1}\Gamma(p \cdot \hat{b}^{-1}) \cheq \Gamma(\check{q}, u) \setminus u\hat{b}A^*$.
      We additionally have $b\Gamma(p \cdot b) \cheq \Gamma(p \cdot b) \cheq \Gamma(q \cdot b) = vb\Gamma(q \cdot b)$ for all $b \in \out q$ from the corresponding \texttt{\textbf{assert}} in the algorithm.
      This shows that $\Gamma(p)$ and $\Gamma(\check{q}, v)$ are isomorphic as rooted trees (as required).
      From a terminating computational branch of $U(p, q, b_0)$, we similarly obtain all the ingredients to show $\Gamma(p) \cheq \Gamma(\check{q}, v) \setminus vb_0A^*$, which concludes our proof.
      
      \paragraph*{$\NL$-Hardness.}
      We obtain $\NL$-hardness from a simple reduction from the rooted version of the isomorphism problem from \autoref{ssct:rootedIso}.
      For the reduction, take the pDFAs $\mathcal{A}$ with state $\check{p}$ and $\mathcal{B}$ with state $\check{q}$ and first add another new letter $\top$ to their (without loss of generality) common alphabet $A$ (where the inverse of $\top$ is distinct to $\top$ if we consider it as an involutive alphabet).
      Then, we add new states $\check{p}'$ and $\check{q}'$ to $\mathcal{A}$ and $\mathcal{B}$, respectively, with the transitions $\aedge{\check{p}'}{\top}{\check{p}}$ and $\aedge{\check{p}'}{\top}{\check{p}}$.
      We use the thus modified automata with the states $\check{p}'$ and $\check{q}'$ as the input for the non-rooted isomorphism problem (and observe that all states of their respective automata are reachable from them).
      This is clearly computable in $\LogSpace$.
      Note that the roots of $\Gamma(\check{p}')$ and $\Gamma(\check{q}')$ are the respectively only node in their graph with an outgoing $\top$-edge.
      Thus, any isomorphism  (of non-rooted graphs) must map them to each other and must, thus, be an isomorphism of rooted graphs, which concludes the proof.
    \end{subsection}
  \end{section}
  
  \paragraph*{Acknowledgements.}
  The author would like to thank Mark Kambites for various discussions on topics related of this paper.
  
  \bibliographystyle{plainurl}
  \bibliography{references}
\end{document}